\newtheorem{theorem}{\textbf{Theorem}}
\newtheorem{lemma}{\textbf{Lemma}}
\newtheorem{corollary}{\textbf{Corollary}}
\newtheorem{definition}{\textbf{Definition}}
\begin{document}

\title{Coded Caching with Heterogenous Cache Sizes}

\vspace{-0.25cm}

%\author{\IEEEauthorblockN{Sinong Wang\IEEEauthorrefmark{1}, Xiaohua Tian\IEEEauthorrefmark{1} and Hui Liu\IEEEauthorrefmark{1}}

\author{\IEEEauthorblockN{Sinong Wang, Wenxin Li, Xiaohua Tian, Hui Liu}

\IEEEauthorblockA{\IEEEauthorrefmark{0}Department of Electronic Engineering, Shanghai Jiao Tong University\\
\{snwang,xtian,huiliu\}@sjtu.edu.cn}
}

\maketitle

\vspace{-0.25cm}
\begin{abstract}
We investigate the coded caching scheme under the heterogenous cache sizes.
\end{abstract}

\section{Introduction \label{sec1}}

Coded caching is a novel mechanism to relieve wireless congestion during peak-traffic times for content distribution~\cite{codedcaching1}\cite{codedcaching2}, where the temporal variability of wireless traffics is utilized. With coded caching, popular contents are partially prefetched at users' local cache during the placement phase, i.e., off-peak traffic times, and the rest of the contents are delivered using coded multicasting during delivery phase, i.e., peak traffic times upon request. Compared with the traditional caching scheme that adopts the orthogonal unicasting transmission and the caching gain is straightforwardly dependent on the user's cache size~\cite{caching1}-\cite{caching5}, coded caching provides coding opportunities among different requests during the delivery phase, which further exploits cache resources by jointly optimizing the placement and delivery phases. Efforts have been made to reveal the fundamental limits of coded caching in an information-theoretical perspective that the coded caching scheme in the bottleneck network can achieve a global cache gain to reduce the delivery-phase traffic volume and only exhibits the constant gap to the information-theoretical lower bound.

The seminal works attract much attentions in the community and encourages further investigations on the coded caching scheme~\cite{codedcachingnonuniform}-\cite{lowerbound}; however, existing works have the shared assumption, that is, users have the identical cache size, which is extremely difficult to satisfy in practice. In this paper, we present a comprehensive study on coded caching with heterogeneous cache sizes at the user end.

Several non-trivial challenges need to be addressed before we get the insight of coded caching scheme with heterogeneous cache size. First, what is the fundamental bound in an information theoretical perspective under heterogeneous cache size? Second, how to implement this lower bound by designing the placement and delivery schemes? In particular, could the schemes designed for the setting of homogeneous cache size be applied to the heterogeneous cache size scenario? If not, what is the root cause and how could we design algorithms dedicated for the heterogeneous cache size scenario? Third, how the heterogeneous cache size influences the wireless traffic volume during the delivery phase? Is it possible to obtain an analytical form trade-off between cache size and traffic volume as in the homogeneous setting~\cite{codedcachingnonuniform}-\cite{codedcachingmulti}?

This paper tries to shed light on how to resolve these challenges, where some interesting results have been derived. In the traditional coded caching scheme~\cite{codedcaching2}, it is assumed that each user has an identical memory space and the random caching procedure in the placement phase will produce the content segments of approximate equal size, and form a maximal clique of different segments that can be fully utilized to create the coding opportunities in the delivery phase. However, heterogeneous cache size incurs that the size of content segments in the delivery phase is also heterogeneous, which causes problems for coding. A straightforward solution is to perform padding to smaller-sized segments so that all segments can be aligned for coding. Apparently, such an approach can cause larger-sized segments to miss coding opportunities in the delivery phase thus increase the traffic volume.

Through analyzing the internal structure of such problem, we prove above zero-padding solution is the optimal coded delivery scheme under the random caching procedure, and such miss coding phenomenon is impossible to counteract in this setting. Then we drive the information-theoretical lower bound in this case, an interesting finding from our investigation is revealed that: although introducing potential miss of coding opportunities, coded caching scheme adopting padding under heterogeneous cache size still presents a constant gap to the optimal scheme, and the constant gap is less than 12, which is identical to the homogeneous case. The main reason is that the miss of coding opportunities is an inherent limitation of bottleneck network under the heterogenous cache sizes, which appears not only in the coded caching scheme, but also in the fundamental bound.

To further investigate the fundamental limits of heterogenous cache size, we introduce the concept of probabilistic cache set and characterize such memory-traffic volume tradeoff and order optimality via the numerical statistics of the user cache size distribution. We analytically show that the gap of traffic volume produced by coded caching scheme to the lower bound will decrease when the deviation of all users' cache sizes increases. This result implies that the fundamental bound has a more increasing speed  of traffic volume and the coded caching scheme will gradually degenerate to the uncoded version as the difference among users' cache sizes increases.

Besides that, we find that the coded caching scheme will present the characteristic of the grouping coded delivery (GCD), where users are divided into groups based on their cache sizes and coded caching are performed on each group separately. In particular, when the deviation of users's cache sizes increases, the GCD can be approximately order optimal to the lower bound. This finding indicates that coded caching scheme could be implemented in GCD under real systems, because GCD can gain significant decrease in computational complexity at the cost of very limited performance.

The remainder of the paper is organized as follows. We describe the service model and problem setting in Section~\ref{sec2} and provide some preliminaries and motivations in Section~\ref{sec3}. Section~\ref{sec4} presents our main results under heterogenous cache sizes, including our modified coded caching scheme, traffic volume-memory trade-off and order optimality analysis. Section~\ref{sec5} further investigates this problem under the probabilistic cache set. Numerical analysis are presented in Section~\ref{sec6}. Section~\ref{sec6} concludes this work and exhibits some interesting extensions. The proof of our main results and details are provided in the technical report.

\section{Model and Problem Statement \label{sec2}}

In this paper, we consider a set of users connecting to a content server through a shared wireless link that is similar as the setup in~\cite{codedcaching2}.

\subsection{Service Model}

We consider a network consisted of a content server connected to $K$ users through a shared, error-free link, as illustrated in Fig.~\ref{fig:sysmodel}. The error-free link can be achieved with error correction scheme or reliable transmission scheme in the upper layer. The user set is denoted by $\mathcal{K}=\{1,\ldots,K\}$. The content server has accessed to a database of $N (K\leq N)$ uniform distributed contents $W_{1}, W_{2}, \ldots, W_{N}$ with each of size $F$ bits.\footnote{The size can also be packet-based. For simplicity, we use the bits as the metric in the following.} The same-size-content assumption is for the theoretical convenience, which however does not hinder the practicability of the coded caching operations in the real world, because the main body of content objects can be tailored as the same size for coded caching based distribution and the rest in a small quantity can be distributed in the traditional way. The content index set is denoted by $\mathcal{N}=\{1,\ldots, N\}$. Each user $k$ has an isolated cache space $Z_{k}$ of size $M_{k}F$ bits for some real number $M_{k}\in [0,N]$. All users' cache sizes constitutes a cache set $\mathcal{M}=\{M_{1}, \ldots, M_{K}\}$\footnote{Here we use the cache set to denote the set consisted of users' cache sizes instead of each user's local cache $Z_{k}$.}. Without loss of generality, we assume that the cache set $\mathcal{M}$ is a ordered set, i.e., $M_{1}\leq M_{2} \leq \cdots M_{K}$. The system operates in two phases: a \emph{placement} phase and a \emph{delivery} phase.

\begin{figure}[htb]
\vspace{-1.5cm}
 \centerline{ \includegraphics[angle=0,width=0.70\textwidth]{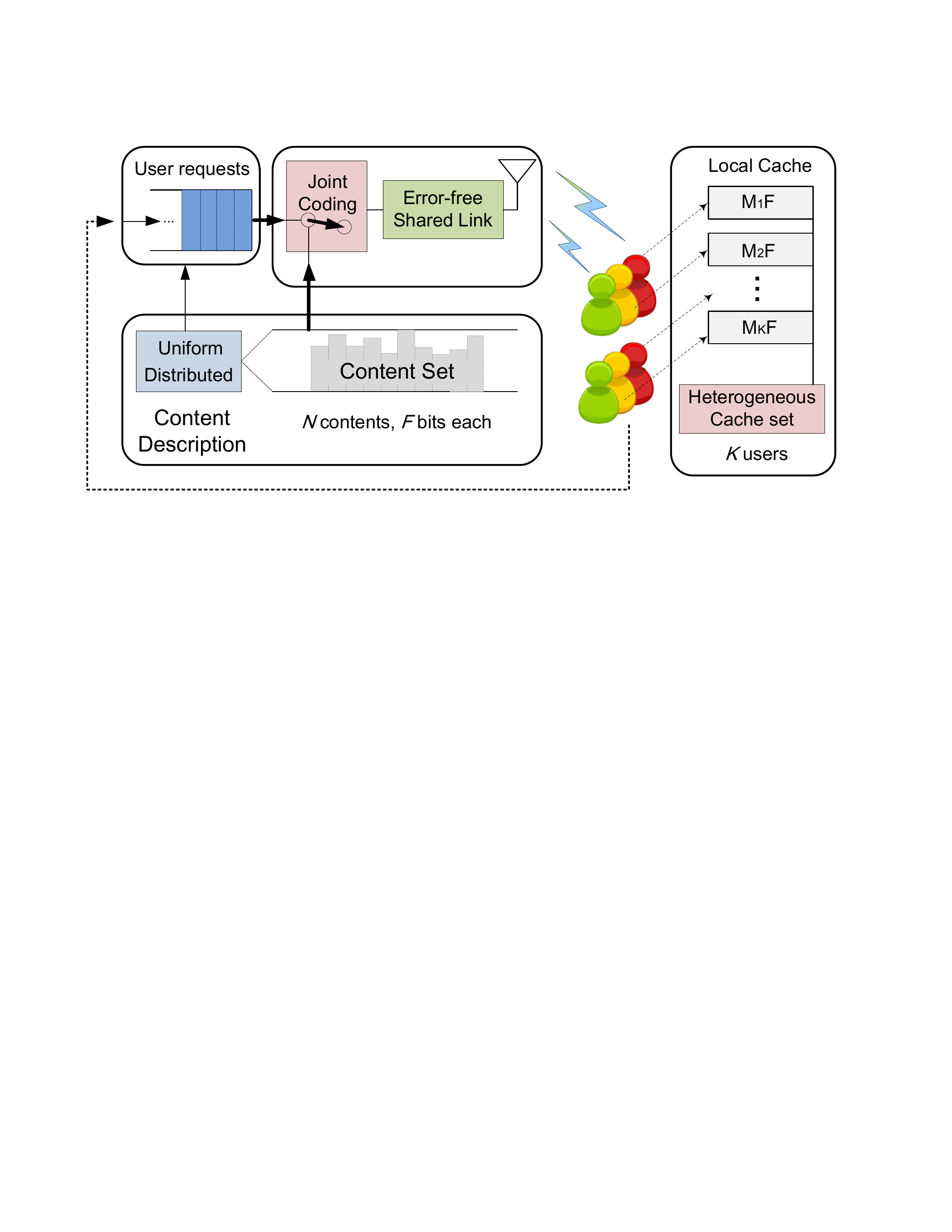}}
 \vspace{-13cm}
  \caption{Network architecture of coded caching.}
  \label{fig:sysmodel}
   \vspace{-0.4cm}
 \end{figure}

In the placement phase, the content server push contents $W_{1}, W_{2}, \ldots, W_{N}$ to the shared link and the caching of each user is done in the decentralized manner that the sever has no control over what parts of content goes into each user's local cache. The users divide their cache space into $N$ identical parts and randomly choose which bits to cache using a random number generator. By uploading the seed value of each user's random number generator, the server can reconstruct the caching contents of each user~\cite{codedcaching2}.

In the delivery phase, each user first sends its request $d_{k}$ ($d_{k}$ is the index of content $W_{d_{k}}$, $d_{k}\in \mathcal{N}, k\in \mathcal{K}$). These requests are independently and identically distributed (i.i.d) across the contents and users.  Then the server collects all users' requests $(d_{1}, d_{2}, \cdots, d_{K})$ and proceeds by transmitting a signal $X_{(d_{1}, d_{2}, \cdots, d_{K})}$ of size $R_{\mathfrak{F}}^{(d_{1}, d_{2}, \cdots, d_{K})}F$ bits over the shared link. This metric are referred to as the load or the traffic volume of the shared link under scheme $\mathfrak{F}$. Using the caching contents and signal received over the shared link, each user can reconstruct its requested contents.

\subsection{Problem Statement}

Then based on above setting, we present the basic definitions in our problem.

A memory-traffic volume pair $(\mathcal{M},R_{\mathfrak{F}}^{(d_{1}, \cdots, d_{K})})$ is \emph{achievable for scheme $\mathfrak{F}$ under requests $(d_{1}, \cdots, d_{K})$} if every user $k$ is able to reconstruct its requested content $W_{d_{k}}$ with error probability $P_{e}\rightarrow 0$ and produce the traffic volume $R_{\mathfrak{F}}{(d_{1}, \cdots, d_{K})}F$ bits under the cache set $\mathcal{M}$. A memory-traffic volume pair $(\mathcal{M},R^{\mathfrak{F}})$ is \emph{achievable for scheme $\mathfrak{F}$} if this pair is achievable for every possible requests $(d_{1}, \cdots, d_{K})$ in the delivery phase. Defined by
\begin{definition}(Achievable scheme)
\begin{equation}
R_{\mathfrak{F}}\triangleq \max\limits_{(d_{1}, \cdots, d_{K})\in \mathcal{N}^K}R_{\mathfrak{F}}^{(d_{1}, \cdots, d_{K})}
\end{equation}
the worst case normalized traffic volume for scheme $\mathfrak{F}$.
\end{definition}

We use the $R^{*}$ to represent the smallest traffic volume such that $(\mathcal{M}, R^{*})$ is \emph{achievable}. Defined by
\begin{definition}(Optimal scheme)
\begin{equation}
\left(\mathcal{M}, R^{*}\right)\triangleq \inf \left\{\left(\mathcal{M},R_{\mathfrak{F}}\right),\forall \mathcal{M}, \mathfrak{F} \right\}
\end{equation}
the infimum of all achievable $\left(\mathcal{M}, R_{\mathfrak{F}}\right)$.
\end{definition}

Clearly, $R_{\mathfrak{F}}$ is function of cache set $\mathcal{M}$ and number of users $K$ and number of contents $N$. To emphasize this dependency, we rewrite above traffic volume as $R_{\mathfrak{F}}(\mathcal{M},N,K)$ and $R^{*}(\mathcal{M},N,K)$. The aim of this paper is to find a scheme $\mathfrak{F}$ such that $R_{\mathfrak{F}}(\mathcal{M},N,K)$ guarantees the order optimality, defined as
\begin{definition}(Order optimality)
The scheme $\mathfrak{F}$ is order optimal if only if
\begin{equation}\label{eq:deforderopt}
\frac{R_{\mathfrak{F}}(\mathcal{M},N,K)}{ R^{*}(\mathcal{M},N,K)} \mathop  \leq  C,
\end{equation}
$C$ is a constant independent of the system parameters $\mathcal{M}, N$ and $K$.
\end{definition}

We can see that the order optimality can be guaranteed only if the traffic volume produced by scheme $\mathfrak{F}$ has the constant gap to the optimum. For the simplicity of the following illustration, we refer $\mathcal{M}^{I}$ as the homogeneous cache set for the heterogenous cache set $\mathcal{M}$, both of which have the identical aggregate cache size. Defined by
\begin{definition}(Homogeneous cache set)
The homogeneous cache set of $\mathcal{M}$ is defined as $\mathcal{M}^{I}=\{\overline{M},\overline{M},\ldots,\overline{M}\}$, where $\overline{M}=\mathbb{E}_{k}[M_{k}]$ the average size of all users' local caches.
\end{definition}

\section{Preliminary and Motivation \label{sec3}}

\subsection{Related Works}

The wireless traffics presents a high temporal variability in network traffic volume and the caching is a promising way to balance such traffic load. One line of studies~\cite{caching1}-\cite{caching5} were centered on optimizing the system throughput of the caching network without considering the coding in the transmission, where the performance is limited by the size of each user. The second line of researchers, recently, investigate such problem in the information-theoretical perspective and propose a novel technique named as coded cache. This technique derives from the index coding problem, which is to determine the minimum code length to satisfy multiple given users requests under given side information, i.e., cache sate, in the broadcasting channel~\cite{indexcoding1}-\cite{indexcoding3}. The coded caching scheme is a two-phases approach including the side information choice (placement phase) and index coding delivery (delivery phase). Through jointly optimizing these two phases, the code length under any user requests can be minimized and the system performance, especially in the aspect of traffic volume, can be improved significantly compared to the uncoded counterpart.

Due to the significant gain in reducing traffic volume and the straightforward analytical form, this result has attracted much attention in the community. There are three sub-lines of this feild. The first line of work~\cite{lowerbound} mainly investigates more tight information-theoretical lower bound. The second line of studies are aim to apply the coded caching technique to other network structures such as hierarchical caching system in~\cite{codedcachinghier} and heterogenous network with multi-level cache access in~\cite{codedcachingmulti}. The third line of works focuses on the more practical scenarios. In~\cite{codedcachingnonuniform}--\cite{NonuniformJi}, the authors assume the content has the non-uniformly populated contents and propose the specific cache allocation strategy to minimize the delivery-phase traffic volume. In~\cite{codedcachingonline}, the authors consider the dynamic content set and design the cache updating scheme, termed as coded LRS, to approximate the lower bound. Following this line of research, in this paper, we consider the more general scenarios with heterogeneous cache sizes. We now briefly review the original coded caching scheme with an example.

\textbf{Example 1} (Decentralized Coded Caching Scheme $\mathfrak{F}$) Suppose the system is distributing $2$ contents A and B to $2$ users, each with the cache size $MF$ bits. The size of each content is also $F$ bits. In the placement phase, each user randomly caches $MF/2$ bits of content A and B independently. Let us focus on content $A$. The operations of placement phase partition content A into four subcontents, $A=(A_{{\O}},A_{1},A_{2},A_{1,2})$, where $U\subset\{1,2\}$, and $A_{U}$ denotes the segments of content $A$ that are prefetched in the memories of users in $U$. For example, $A_{1}$ represents the segments of $A$ only available in the memory of user 1. We use $|\cdot|$ to denote the expectation size of each segment, and $|A_{{\O}}|=(1-M/2)^2F\text{ bits}, |A_{1}|=|A_{2}|=(M/2)(1-M/2)F\text{ bits}$ and $|A_{1,2}|=(M/2)^2F\text{ bits}$. The same analysis holds for content B. In the delivery phase, the worst case is that user 1 and user 2 requesting content A and B, respectively. User 1 has cached subcontent $A_{1}$ and $A_{1,2}$ in the placement phase and lacks $A_{{\O}}$ and $A_{2}$. Similarly, user 2 has already cached $B_{2}$ and $B_{1,2}$ , and lacks $B_{{\O}}$ and $B_{1}$. With traditional uncoded caching scheme, the server is required to unicast $A_{{\O}}$ and $A_{2}$ to user 1 and unicast $B_{{\O}}$ and $B_{1}$ to user 2. The total traffic volume is
\begin{equation*}
2\left(\frac{M}{2}\right)\left(1-\frac{M}{2}\right)F+2\left(1-\frac{M}{2}\right)^2F=(2-M)F .
\end{equation*}
With the coded caching scheme, the server can satisfy the requests by transmitting $A_{{\O}}$, $B_{{\O}}$ and $A_{2}\oplus B_{1}$ over the shared link, where $\oplus$ denotes the bit-wise XOR operation. The traffic volume over the shared link is
\begin{align*}
\left(\frac{M}{2}\right)\left(1-\frac{M}{2}\right)F+2\left(\frac{M}{2}\right)^2F&= \left(1-\frac{M}{4}\right)(2-M)F.
\end{align*}

We can see that, the coded caching scheme has a coded gain of $\left(1-\frac{M}{4}\right)$ in contrast to the uncoded caching scheme. The traffic volume for general case can be seen in~\cite{codedcaching2}.

\subsection{Motivation}

In the regime of heterogenous cache set, an intuition is whether the decentralized coded caching scheme $\mathfrak{F}$ can be applied straightforward. In fact, making the following simple modifications of scheme $\mathfrak{F}$, we can adopt it in our regime.

\emph{In the placement phase, user $k$ randomly prefetches $M_{k}F/N$ bits of content $n$.}

Based on this caching strategy, the same content stored in different users' local caches will occupy the different size of memory space. Thus, in the delivery phase, for each multicast user group, the size of segment, i.e., $A_{2}$ and $B_{1}$,  for each user will be different. Since these segments should participate bit-wise XOR, all the segments in one group should be bits-padded to the longest one. A possible padding method is to pad zero bits to the shorter segments and we refer this scheme as $\mathfrak{F}_{o}$. The following example will illustrate the performance of scheme $\mathfrak{F}_{o}$.

\textbf{Example 2} (Coded Caching with Heterogenous Cache Set) Suppose that there are three similar system distributing $N$ contents to $K=2$ users. The first system adopts our modified coded caching scheme $\mathfrak{F}_{o}$ and the users have the cache set $\mathcal{M}=\{(2-\alpha)M,(2+\alpha)M\}F$ bits with $0<\alpha<2$. The second system divides users into two groups and adopts scheme $\mathfrak{F}_{o}$ in each group. The users take the same cache set $\mathcal{M}$. The third system adopts traditional coded caching scheme $\mathfrak{F}$, and have the uniform cache size $\mathcal{M}^{I}=\{2M,2M\}F$ bits. Remark that above three systems have the same aggregate cache size of $4MF$ bits.

Assume that user 1 and user 2 request content A and content B. Based on the coded delivery technique, the server transmitted signal $A_{2}\oplus B_{1}$, $A_{\emptyset}$ and $B_{\emptyset}$, and the corresponding signal size via first system is
\begin{equation}\label{eq:exmpletraffic1}
R_{\mathfrak{F_{o}}}(\mathcal{M},N,2)=\left(1-\frac{(2-\alpha)M}{N}\right)\left(2-\frac{(2+\alpha)M}{N}\right)
\end{equation}

The corresponding signal size via second system is
\begin{align}
R_{\mathfrak{U}}\left(\mathcal{M},N,2\right)&=\left(2-\frac{4M}{N}\right)\label{eq:exmpletraffic2}.
\end{align}

The signal size via third system is
\begin{align}
R_{\mathfrak{F}}\left(\mathcal{M}^{I},N,2\right)&=\left(1-\frac{2M}{N}\right)\left(2-\frac{2M}{N}\right)\label{eq:exmpletraffic3}.
\end{align}

Then we consider two kinds of scenarios: the cache set of small deviation ($\alpha=0.2$) and the cache set of large deviation ($\alpha=1.8$) to refer the situation that the size of each user's local cache is similar and extremely different, separately. The number of contents $N=4M$.

Base on the traffic volume formula (\ref{eq:exmpletraffic1})-(\ref{eq:exmpletraffic3}), we get the following results, shown in TABLE~\ref{tab:example}. It can be seen that, when the size of each user's local cache is similar, the traffic volume produced by system 1 approximates to the system 3 and much lower than system 2; when the size of each user's local cache is large, it approximates to the system 2 and much larger than the system 3. This result comes from a ``bits waste'' phenomenon that when the user cache size varies widely, the size of requested segments $A_{2}, B_{1}$ are largely different, and the scheme $\mathfrak{F}_{o}$ will pad lots of useless zero bits to the smaller segment $B_{1}$, which will diminish part of coding opportunities for the longer segment $A_{2}$ and thus increase the traffic volume.

\begin{table}[htb]
\vspace{-0.3cm}
\caption{The coded data and traffic volume incurred} \vspace{-0.5cm}
\begin{center}
%\begin{tabular}{|c|c|}
\begin{tabular}{|c|c|c|c|}
\hline
Scenario/Traffic(bits) & system 1 & system 2 & system 3\\
\hline \hline
$\alpha=0.2$ & 0.7975F & 1F & 0.75F \\
\hline
$\alpha=1.8$ & 0.9975F & 1F & 0.75F \\
\hline
\end{tabular}
\end{center}
 \label{tab:example}
 \vspace{-0.5cm}
 \end{table}

Example 2 shows that when the cache set is approximately uniform, the coded caching scheme $\mathfrak{F}_{o}$ still has a significant caching gain, while when the cache set is extremely different, such gain will diminish and the grouping of users (system 2) will not reduce the performance of whole system. This phenomenon and the reason behind it inspires us in three dimensions. First, whether the scheme $\mathfrak{F}_{o}$ is still order optimal or whether can we develop some new techniques to overcome the miss of coding opportunities when the users' cache sizes are extremely different. Second, in the derivation of traffic formula (\ref{eq:exmpletraffic1}), there exists amount of maximization operation in traffic volume-memory trade-off and the number of such operation will increase exponentially with number of users $K$ increasing. The key is whether we can get a close-form expression of such trade-off. Third, what condition the cache set should satisfy such that coded caching scheme will have a similar performance of the GCD scheme. The following work answers these questions.

\section{Fundamental Bound and Implementation Scheme\label{sec4}}

In this section, we will derive the fundamental bound, design the corresponding scheme under the heterogenous cache sizes and prove its order optimality.

\subsection{Information-theoretical lower bound}

The information-theoretical lower bound is independent of any specific schemes, instead, only dependent on the system parameters including $\mathcal{M}, N$ and $K$. The following theorem gives this lower bound on the optimal achievable traffic volume $R^{*}(\mathcal{M},N,K)$ based on the Fano's inequality and cut set-bound argument~\cite{tcover}.

\begin{theorem}(Cut-set Bound)\label{thm:cutsetbound}
For caching problem with $N$ contents, $K$ users, and ordered cache set $\mathcal{M}=\{M_1,M_2,\ldots, M_K\}$, we have
\begin{align}
R^{*}(\mathcal{M},N,K)&\geq R^{c}(\mathcal{M},N,K)\nonumber\\
\triangleq & \max\limits_{s\in \mathcal{K}}\left\{\max\limits_{U\subseteq[K],|U|=s}\left\{s-\frac{\sum_{i\in U}M_{i}}{\lfloor N/s\rfloor}\right\}\right\}\label{eq:cutsetbound}.
\end{align}
\end{theorem}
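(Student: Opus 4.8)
The plan is to establish the cut-set bound via a standard information-theoretic counting argument, following the classical approach of Maddah-Ali and Niesen but carefully tracking the heterogeneous cache sizes. The key idea is this: fix an integer $s \in \mathcal{K}$ and select any subset $U \subseteq [K]$ of exactly $s$ users. I then consider a sequence of requests that forces these $s$ users to collectively recover a large number of distinct contents using only their cached bits plus a bounded number of transmitted signals. The cut separates the server-plus-shared-link side from the selected users' caches; the total information that must flow across this cut lower-bounds the traffic volume.

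\emph{First} I would set $\ell = \lfloor N/s \rfloor$ and partition a block of $s\ell$ contents into $\ell$ consecutive groups of $s$ contents each. I consider $\ell$ successive rounds of requests: in round $j$, the $s$ users in $U$ request the $j$-th group of $s$ distinct contents. Over these $\ell$ rounds, the $s$ users must be able to reconstruct all $s\ell$ distinct contents, each of size $F$ bits, giving a total of $s\ell F$ bits of desired information. \emph{Next} I would apply the cut-set argument: this information must be recoverable from (i) the aggregate cache contents $Z_i$ for $i \in U$, whose total size is $\sum_{i\in U} M_i F$ bits, together with (ii) the $\ell$ transmitted signals $X_1,\ldots,X_\ell$, each of size at most $R^{*} F$ bits. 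Invoking Fano's inequality to handle the vanishing error probability $P_e \to 0$, the entropy bound yields
\begin{equation}
s\ell F \leq \sum_{i\in U} M_i F + \ell R^{*} F + \ell F \epsilon_F,
\end{equation}
where $\epsilon_F \to 0$ as $F \to \infty$. Dividing by $\ell F$ and rearranging gives
\begin{equation}
R^{*}(\mathcal{M},N,K) \geq s - \frac{\sum_{i\in U} M_i}{\lfloor N/s \rfloor}.
\end{equation}

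\emph{Finally}, since $U$ was an arbitrary $s$-subset and $s$ was arbitrary in $\mathcal{K}$, I maximize over both choices to obtain the stated bound $R^{c}(\mathcal{M},N,K)$. The maximization over $U$ with $|U|=s$ is what replaces the homogeneous simplification; in the identical-cache case every $s$-subset contributes the same value, but here the inner maximum is genuinely attained by selecting the $s$ users with the \emph{largest} cache sizes, so the tightest bound couples the combinatorial subset choice to the ordering $M_1 \le \cdots \le M_K$.

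\emph{The main obstacle} I anticipate is making the multi-round request construction fully rigorous: I must ensure that a single fixed caching strategy and the collection of delivery signals simultaneously allow correct decoding across all $\ell$ rounds, so that all $s\ell$ contents are jointly recoverable from the same caches and the $\ell$ signals. This requires the signals to be defined per request pattern (so $X_j$ is the signal for round $j$'s request vector) and a careful entropy decomposition showing that conditioning on the caches reduces the residual uncertainty to what the signals must carry. The Fano-inequality step and the $F \to \infty$ limit are routine once this joint-decodability bookkeeping is set up correctly, so I would devote the most care to precisely specifying the request sequence and the cut.
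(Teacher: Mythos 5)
Your proposal is correct in substance and follows essentially the same route as the paper's own (sketched) proof: a cut-set argument over any $s$-subset of caches, $\lfloor N/s\rfloor$ rounds of requests forcing joint recovery of $s\lfloor N/s\rfloor$ distinct contents, Fano's inequality to absorb the vanishing error probability, and a final maximization over $U$ and $s$. One concrete error in your closing remark, however: since each subset contributes the term $s-\frac{\sum_{i\in U}M_{i}}{\lfloor N/s\rfloor}$, the inner maximum is attained by the $s$ users with the \emph{smallest} cache sizes (minimizing $\sum_{i\in U}M_{i}$), not the largest; this is exactly why the paper reduces the inner maximization to $s-\sum_{i=1}^{s}M_{i}/\lfloor N/s\rfloor$ under the ordering $M_{1}\leq\cdots\leq M_{K}$. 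The error does not invalidate the theorem as you prove it---you maximize over all $U$ regardless---but it would propagate incorrectly into any later analysis that uses the reduced form of the bound, so you should fix the statement before relying on it.
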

\begin{proof}
We provide a proof sketch of Theorem~\ref{thm:cutsetbound}. We regard each user's cache and the content server as the nodes and the broadcasting links between the users and content server as the edges in the bipartite graph. Then, for a feasible $(\mathcal{M}, R)$ pair, the total information contained in the memory of any subset of caches and the server transmitted signals must be at least the size of contents that users accessing these caches can reconstruct. Using above concept and traverse all subsets of users, Theorem~\ref{thm:cutsetbound} follows. See~\cite{techreport} for the details.
\end{proof}
\textbf{Insight on the lower bound:}\\
\textbf{(1)} In the traditional homogeneous case, the lower bound achieves the approximately maximum value of $\frac{N}{4\overline{M}}$ when $s\approx \frac{N}{2\overline{M}}$, which implies that the optimal scheme attains the delivery-phase traffic volume that is inverse proportional to the cache size. The traditional coded caching scheme also attains such inverse proportional trade-off and only exhibits constant gap less than 12. While in the regime of heterogenous cache set, above analysis will become more complicated. Consider the assumption that $M_{1} \leq M_{2} \leq \cdots \leq M_{K}$, the inner maximization operation in the lower bound is reduced to $s-\sum_{i=1}^{s}M_{i}/\lfloor N/s\rfloor$. Then the cut set bound $R^{c}(\mathcal{M},N,K)$ can be regarded as the function of $s$ and it can be proved that $R^{c}(\mathcal{M},N,K)$ achieves the maximum when $s$ satisfies $\sum_{i=1}^{s}M_{i}+(s-1)M_{s}\leq N$ and $\sum_{i=1}^{s+1}M_{i}+sM_{s+1}> N$, which has a discrete form and is impossible to derive such a simple form as in the homogenous case. But we can use this condition to simplify such lower bound and make the order analysis in the sequel.\\
\textbf{(2)} As can be seen in (\ref{eq:cutsetbound}), when the users has the identical cache sizes, above lower bound will embody the bound derived in~\cite{codedcaching1} and we have the following result.
\begin{corollary}(Relation between two kinds of cut-set bound)\label{coro:twolowerbound}
For caching problem with $N$ contents, $K$ users, we have
\begin{equation}
R^{c}(\mathcal{M},N,K)\geq R^{c}(\mathcal{M}^{I},N,K),
\end{equation}
equal if only if $\mathcal{M}=\mathcal{M}^{I}$.
\end{corollary}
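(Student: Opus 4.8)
The plan is to collapse the double maximization in (\ref{eq:cutsetbound}) into a single maximization over the cut size $s$, and then compare the heterogeneous and homogeneous bounds term by term in $s$. First I would exploit the ordering $M_1\le M_2\le\cdots\le M_K$: for a fixed $s$, the inner objective $s-\frac{\sum_{i\in U}M_i}{\lfloor N/s\rfloor}$ is maximized over all $U$ with $|U|=s$ by picking the $s$ \emph{smallest} cache sizes, i.e. $U=\{1,\ldots,s\}$, since $\lfloor N/s\rfloor>0$ and the cache term enters with a minus sign. Hence $R^{c}(\mathcal{M},N,K)=\max_{s\in\mathcal{K}}\bigl\{s-\frac{1}{\lfloor N/s\rfloor}\sum_{i=1}^{s}M_i\bigr\}$, exactly as noted in the insight following Theorem~\ref{thm:cutsetbound}, while for the homogeneous set each partial sum becomes $s\overline{M}$ and the expression specializes to $\max_{s}\bigl\{s-\frac{s\overline{M}}{\lfloor N/s\rfloor}\bigr\}$.

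The key step is the elementary fact that the average of the $s$ smallest entries never exceeds the global average: since the sequence is increasing, $\frac{1}{s}\sum_{i=1}^{s}M_i\le M_s\le\frac{1}{K-s}\sum_{i=s+1}^{K}M_i$ for $s<K$, which by a mediant argument yields $\frac{1}{s}\sum_{i=1}^{s}M_i\le\overline{M}$, i.e. $\sum_{i=1}^{s}M_i\le s\overline{M}$ for every $s\in\mathcal{K}$. This gives the pointwise comparison $s-\frac{1}{\lfloor N/s\rfloor}\sum_{i=1}^{s}M_i\ge s-\frac{s\overline{M}}{\lfloor N/s\rfloor}$ for each $s$, and since taking the maximum over $s$ preserves a pointwise $\ge$ (evaluate the larger function at whichever $s$ maximizes the smaller one), the bound $R^{c}(\mathcal{M},N,K)\ge R^{c}(\mathcal{M}^{I},N,K)$ follows at once. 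Equivalently, one may package this through majorization: the constant vector $\mathcal{M}^{I}$ is majorized by $\mathcal{M}$, each map $\mathcal{M}\mapsto-\sum_{i=1}^{s}M_i^{\uparrow}$ (sum of the $s$ smallest, negated) is symmetric and convex hence Schur-convex, and a pointwise maximum of Schur-convex functions is Schur-convex, so $R^{c}$ is Schur-convex and the inequality is immediate.

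The delicate part, and the step I expect to be the main obstacle, is the equality characterization. The ``if'' direction is trivial. For ``only if'' I would argue the contrapositive: if $\mathcal{M}\neq\mathcal{M}^{I}$ then the entries are not all equal, so $\sum_{i=1}^{s}M_i< s\overline{M}$ strictly for every $s$ with $1\le s<K$ (the $s$ smallest entries cannot all equal $\overline{M}$ unless every entry does). Evaluating at the cut size $s^{\star}$ that attains the homogeneous maximum then certifies strict inequality, \emph{provided} $s^{\star}$ can be taken strictly less than $K$. The genuine difficulty is the boundary case $s^{\star}=K$: there $\sum_{i=1}^{K}M_i=K\overline{M}$ holds with equality by the very definition of $\overline{M}=\mathbb{E}_k[M_k]$, so the partial-sum gap vanishes and the two bounds coincide at $s=K$. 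In this regime strictness, if it holds at all, must be extracted from a neighboring term (e.g. $s=K-1$, using monotonicity of $\lfloor N/s\rfloor$ and the strict deficit $\sum_{i=1}^{K-1}M_i<(K-1)\overline{M}$) and shown to exceed the common value at $s=K$. Pinning down when this succeeds—and recognizing that for large $N$ and small $\overline{M}$, where the maximizer genuinely sits at $s=K$, the neighboring term need not dominate—is exactly where the real work of the equality claim lies, and it is the part I would scrutinize most carefully before asserting the ``iff'' in full generality.
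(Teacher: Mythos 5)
Your proof of the inequality itself is correct and is the natural one: use the ordering to reduce the inner maximization to the $s$ smallest caches, note that the average of the $s$ smallest entries is at most $\overline{M}$ so that $\sum_{i=1}^{s}M_i\le s\overline{M}$, and pass the pointwise comparison through the outer maximum over $s$. The Schur-convexity packaging is a valid alternative phrasing of the same fact. The paper states this corollary without any proof, so there is nothing to compare against on that front; your route is the one the ``insight'' paragraph after Theorem~\ref{thm:cutsetbound} implicitly suggests.

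Your hesitation about the equality characterization is not a technical loose end but a genuine defect in the statement: the ``only if'' direction is false, and the obstruction is exactly the boundary case you isolate. Take $K=2$, $N=10$, $\mathcal{M}=\{0,2\}$, so $\overline{M}=1$ and $\mathcal{M}^{I}=\{1,1\}$. Then $R^{c}(\mathcal{M},10,2)=\max\{1-\tfrac{0}{10},\,2-\tfrac{2}{5}\}=1.6$ and $R^{c}(\mathcal{M}^{I},10,2)=\max\{1-\tfrac{1}{10},\,2-\tfrac{2}{5}\}=1.6$, yet $\mathcal{M}\neq\mathcal{M}^{I}$. The term at $s=K$ is always identical for the two bounds because $\sum_{i=1}^{K}M_i=K\overline{M}$ by definition, so whenever both maxima are attained at $s=K$ (which happens whenever $\overline{M}$ is small relative to $N$) the bounds coincide for an entire family of heterogeneous cache sets. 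Hence the neighboring-term rescue you contemplate cannot succeed in general; the equality claim requires an extra hypothesis (e.g., that the maximizing cut size satisfies $s^{\star}<K$) or must be weakened to a plain inequality. You were right to refuse to assert the ``iff'' in full generality, and you could strengthen your write-up by recording a counterexample of this kind rather than leaving the failure mode as a conjecture.
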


Corollary~\ref{coro:twolowerbound} shows that when the aggregate cache size is fixed, if we regard the lower bound as the function of the cache set $\mathcal{M}$, this function achieves the conditional minimum when cache set is homogenous. Namely, the cut-set bound will increase when the aggregate cache sizes is distributed in a nonuniform manner. This result can be regarded as a preliminary illustration for the degenerated performance under heterogenous cache set. The reason will be discussed in the last subsection and the further quantitative discussion will be presented in the Section~\ref{sec5}.

Previous work~\cite{codedcaching1}~\cite{lowerbound} points out the cut-set bound is sometimes loose, and tighter bounds on $R^{*}(\mathcal{M},N,K)$ can be derived via stronger arguments than the cut-set bound. However, the cut-set bound along is sufficient for illustration of the main idea of our work and we will use this bound in the following discussion.

\subsection{Coded Caching Scheme}

For clarity, we first present the main procedure of our modified scheme $\mathfrak{F}_{o}$ mentioned before. The pseudocode is shown in Algorithm 1. Then we utilize the inherent characteristic of scheme $\mathfrak{F}_{o}$ to prove the zero-padding solution is the optimal coded delivery scheme under the random caching procedure.

\begin{algorithm}[htb]
\caption{Decentralized coded caching scheme with nonuniform cache size $\mathcal{M}$.}
\textbf{Placement Phase}\\
\For{$(k=0; k<K; k++)$}{
\For{$(n=0; n<N; n++)$}{
user $k$ randomly prefetches $M_{k}F/N$ bits of content $n$ \;
}
}
\textbf{Delivery Phase}\\
\For{$(k=K, k>0; k--)$}{
\For{choose $k$ users from $K$ users to form a subset $U$}{
$Maxsize\leftarrow \max_{k \in U}|V_{k,U/\{k\}}|$\;
\For{$l \in U$}{
\If{$|V_{k,U/\{k\}}|<Maxsize$}{
$temp \leftarrow \left(Maxsize-|V_{k,U/\{k\}}|\right)$ bits of all zero.\;
$V_{k,U/\{k\}}\leftarrow V_{k,U/\{k\}}+temp$\;
}
$X_U\leftarrow X_U \oplus V_{k,U/\{k\}}$ \;
}
Multicast the coded data $X_U$ to users in $U$.
}
}
\end{algorithm}

Operator $\bigoplus$ refers to bitwise XOR operation. In the placement phase, the random caching procedure will divide content $i$ into $2^K$ segments: $W_{i,U},U\subset 2^{\mathcal{K}}$. In the delivery phase, the element $V_{k,U/\{k\}}$ in signal $X_U$ represents the segment of user $k$'s requested content that being cached in users of set $U/\{k\}$\footnote{If user k's request is $d_{k}$, then $V_{k,U/\{k\}}=W_{d_{k},U/\{k\}}$.}. Since we take the zero-bits-padding for each segment before delivery, the size of signal $X_{U}$ is determined by the longest segment $V_{k,U/\{k\}}, k \in U$.

\textbf{Example 3} ($K=3, \mathcal{M}=\{M_{1}, M_{2}, M_{3}\}$) Consider the caching problem with $N$ contents and $K=3$ users with cache set $\{M_{1}, M_{2}, M_{3}\}$. Based on the Algorithm 1, each content $k$ will be divided into 8 sub-contents such as $A=\{A_{{\O}}, A_{1},A_{2},A_{3},A_{12},A_{13},A_{23},A_{123}\}$, and the transmission signal and corresponding signal size are,
\begin{itemize}
  \item $|V_{1,23}\oplus V_{2,13}\oplus V_{3,12}|\mathop  = \limits^{(a)}|V_{1,23}|=\left(1-\frac{M_{1}}{N}\right)\left(\frac{M_{2}}{N}\right)$ $\left(\frac{M_{3}}{N}\right)$;
  \item $|V_{1,2}\oplus V_{2,1}|\mathop  = \limits^{(a)}|V_{1,2}|=\left(1-\frac{M_{1}}{N}\right)\left(\frac{M_{2}}{N}\right)\left(1-\frac{M_{3}}{N}\right)$;
  \item $|V_{3,1}\oplus V_{1,3}|\mathop  = \limits^{(a)}|V_{3,1}|=\left(1-\frac{M_{1}}{N}\right)\left(1-\frac{M_{2}}{N}\right)\left(\frac{M_{3}}{N}\right)$;
  \item $|V_{2,3}\oplus V_{3,2}|\mathop  = \limits^{(a)}|W_{2,3}|=\left(1-\frac{M_{1}}{N}\right)\left(1-\frac{M_{2}}{N}\right)\left(\frac{M_{3}}{N}\right)$;
  \item $|V_{1,{\O}},V_{2,{\O}},V_{3,{\O}}|=3\left(1-\frac{M_{1}}{N}\right)\left(1-\frac{M_{2}}{N}\right)\left(1-\frac{M_{3}}{N}\right)$;
\end{itemize}
The operation (a) is based on the zero-bit-padding technique. Assume the users request content A, B and C in the delivery phase. The segment $V_{1,23}=A_{23}$, $V_{2,13}=B_{13}$, and other segment has the same meaning. The following lemma shows an important characteristic of this longest segment.

\begin{lemma}(Ordered segment size)\label{lem:orderedsize}
Based on Algorithm 1, the size of each transmission signal $X_{U}$ is,
\begin{equation}
|X_{U}|=\max\limits_{k \in U}|V_{k,U/\{k\}}|=|V_{\inf U,U/\{\inf U\}}|.
\end{equation}
\end{lemma}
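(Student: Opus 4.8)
The plan is to prove the two equalities in the statement separately. The first equality, $|X_{U}|=\max_{k\in U}|V_{k,U\setminus\{k\}}|$, is an immediate consequence of the zero-padding step in Algorithm 1: every segment $V_{k,U\setminus\{k\}}$ that enters $X_{U}$ is padded with all-zero bits until its length reaches $Maxsize=\max_{k\in U}|V_{k,U\setminus\{k\}}|$, and the bit-wise XOR of binary strings of a common length is again a string of that same length. Hence $|X_{U}|$ equals the length of the longest constituent segment, which settles the first equality and reduces the task to identifying \emph{which} user in $U$ attains this maximum.

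For the second equality I would first write down the expected size of each segment produced by the decentralized placement. Since user $j$ caches each bit of any content independently with probability $M_{j}/N$, the segment $V_{k,U\setminus\{k\}}=W_{d_{k},U\setminus\{k\}}$ consists of exactly those bits of $W_{d_{k}}$ that are \emph{absent} from user $k$, \emph{present} in every user of $U\setminus\{k\}$, and \emph{absent} from every user outside $U$. Writing $m_{j}=M_{j}/N$, independence across users yields
\begin{equation}
|V_{k,U\setminus\{k\}}| = F\,(1-m_{k})\prod_{j\in U\setminus\{k\}} m_{j}\prod_{j\notin U}(1-m_{j}),
\end{equation}
which is consistent with the explicit sizes tabulated in Example 3 (e.g.\ $|V_{1,23}|=(1-m_{1})m_{2}m_{3}F$).

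The core step is then a monotonicity comparison across $k\in U$. The factor $\prod_{j\notin U}(1-m_{j})$ is independent of $k$, so for any two indices $k,k'\in U$ I would factor out the common product $\prod_{j\in U\setminus\{k,k'\}}m_{j}$ and compute the difference, which collapses after the cross terms $m_{k}m_{k'}$ cancel to
\begin{equation}
|V_{k,U\setminus\{k\}}| - |V_{k',U\setminus\{k'\}}| = C\,(m_{k'}-m_{k}),
\end{equation}
where $C=F\prod_{j\notin U}(1-m_{j})\prod_{j\in U\setminus\{k,k'\}}m_{j}\ge 0$. Thus the segment size is nonincreasing in $M_{k}$, so the maximum over $k\in U$ is attained at the user of $U$ with the smallest cache size. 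By the standing ordering assumption $M_{1}\le M_{2}\le\cdots\le M_{K}$, this smallest-cache user is precisely $\inf U$, giving $\max_{k\in U}|V_{k,U\setminus\{k\}}|=|V_{\inf U,U\setminus\{\inf U\}}|$ and completing the proof.

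I expect the only delicate point to be verifying the algebraic cancellation in the difference expression, namely that the $m_{k}m_{k'}$ terms drop out so the sign is governed solely by $m_{k'}-m_{k}$, together with the degenerate case where some $M_{j}=0$ for $j\in U\setminus\{k,k'\}$ forces $C=0$; in that case both segments have size zero and the inequality holds with equality, so the identification of the maximizer with $\inf U$ is unaffected.
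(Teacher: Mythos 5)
Your proposal is correct and matches the paper's argument: the first equality follows directly from the zero-padding step, and the second from the explicit expected segment sizes $|V_{k,U\setminus\{k\}}|=F(1-M_k/N)\prod_{j\in U\setminus\{k\}}(M_j/N)\prod_{j\notin U}(1-M_j/N)$, whose pairwise comparison shows the maximum is attained by the smallest-cache user $\inf U$ — exactly the monotonicity intuition the paper gives before deferring the formal proof to its technical report. The only convention worth noting is that $|\cdot|$ denotes expected segment size (as the paper defines in Example 1), which your computation uses consistently.
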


Lemma~\ref{lem:orderedsize} shows that the longest segment of signal $X_{U}$ is user $\inf U$'s requested content $V_{\inf U,U/\inf U}$, who has the minimum cache size of user set $U$. This result is intuitive, since the segment $V_{k,U/k}$ represents the logical unit of content $W_{d_{k}}$ that only stored in the caches of users in $U/k$ and related to the the cache size of these users, if user $k$ has the minimum cache size of all users in $U$, the users in $U/k$ will has the largest memory space to store this logical unit and yield the largest size of it. The strict proof can be seen in~\cite{techreport}.

Based on Lemma~\ref{lem:orderedsize}, we have the following theorem,
\begin{theorem}(Optimality of the coded delivery scheme)\label{thm:optcode}
	The zero-padding scheme in delivery phase is the optimal scheme under the placement phase scheme of Algorithm 1.
\end{theorem}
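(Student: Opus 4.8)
The plan is to prove the claim as a matching converse: fix the placement of Algorithm~1 and show that no delivery scheme can transmit fewer bits than $\mathfrak{F}_{o}$. First I would record, via Lemma~\ref{lem:orderedsize}, that the zero-padding load decomposes cleanly. Each needed segment $W_{d_k,T}$ with $k\notin T$ is coded in exactly one group, namely $U=T\cup\{k\}$, so that
\begin{equation}
R_{\mathfrak{F}_{o}}=\sum_{\emptyset\neq U\subseteq\mathcal{K}}\max_{k\in U}\left|V_{k,U/\{k\}}\right|=\sum_{\emptyset\neq U\subseteq\mathcal{K}}\left|V_{\inf U,\,U/\{\inf U\}}\right|,
\end{equation}
where the second equality is Lemma~\ref{lem:orderedsize}. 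Reindexing by $m=\inf U$ and $T=U/\{m\}$, this equals $\sum_{W\in\mathcal{S}}|W|$ over the collection $\mathcal{S}=\{W_{d_m,T}:m\in\mathcal{K},\,T\subseteq\{m+1,\ldots,K\}\}$, i.e. exactly one segment per group, always the one wanted by the smallest-cache user.

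Next, I would view the delivery phase under a fixed placement realization as an index coding instance: the messages are the disjoint segments $W_{i,S}$, which are mutually independent because the contents are independent and uniform and the decentralized prefetching partitions each content into disjoint bit-strings; the segment $W_{d_k,S}$ is demanded by user $k$ and is side information at exactly the users in $S$. For such an instance the maximum acyclic induced subgraph (MAIS) bound lower-bounds the optimal (possibly nonlinear) broadcast load: if a set of messages induces an acyclic side-information digraph, every valid delivery code must transmit at least the sum of their lengths. I would therefore aim to show that $\mathcal{S}$ induces an acyclic subgraph, which would give $R^{*}\geq\sum_{W\in\mathcal{S}}|W|=R_{\mathfrak{F}_{o}}$ and finish the proof.

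The crux is the acyclicity of $\mathcal{S}$, and this is where the ordering of cache sizes enters decisively. In the side-information digraph there is an edge from $W_{d_m,T}$ to $W_{d_{m'},T'}$ precisely when the user demanding the first message already caches the second, i.e. when $m\in T'$. Suppose a directed cycle $W_{d_{m_1},T_1}\to\cdots\to W_{d_{m_r},T_r}\to W_{d_{m_1},T_1}$ existed. Then $m_i\in T_{i+1}$ for each $i$ (indices mod $r$); but every element of $T_{i+1}$ is strictly larger than $m_{i+1}$ by construction of $\mathcal{S}$, forcing $m_i>m_{i+1}$ for all $i$ and hence the impossible chain $m_1>m_2>\cdots>m_r>m_1$. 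Thus $\mathcal{S}$ is acyclic and the MAIS bound applies.

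The main obstacle I anticipate is not the cycle argument, which is short, but justifying the reduction cleanly: one must argue that the realized segments behave as independent index-coding messages (conditioning on a placement realization and using concentration of the segment sizes around their expectations as $F\to\infty$), and that the MAIS bound is a genuine converse against \emph{all} delivery schemes rather than only XOR-based ones, so that matching it certifies $\mathfrak{F}_{o}$ as optimal and not merely optimal within a restricted class. The precise choice of the acyclic witness $\mathcal{S}$, one maximal segment per group dictated by $\inf U$, is exactly what makes the converse tight, since including any two segments of the same group would create a $2$-cycle and collapse the bound.
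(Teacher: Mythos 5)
Your proof is correct, but it takes a genuinely different route from the paper's. The paper argues on the achievability side: it builds the conflict graph of all $N\cdot 2^{K}$ segments, identifies the maximal cliques/set covers --- including the larger cliques that mix segments of different types, as in Example~4 --- constructs the resulting clique-cover index code, and then verifies that this ``optimal'' code transmits exactly as many bits as the zero-padding scheme, because the useful-bit padding shortens only the non-bottleneck segments of each group while the longest segment $V_{\inf U, U/\{\inf U\}}$ is never reduced. You instead prove a matching information-theoretic converse: using Lemma~\ref{lem:orderedsize} you write $R_{\mathfrak{F}_{o}}$ as the sum of the sizes of exactly one segment per group, namely $W_{d_{m},T}$ with $m=\inf U$ and $T=U/\{m\}\subseteq\{m+1,\ldots,K\}$, observe that this family induces an acyclic subgraph of the side-information digraph (every arc inside the family forces $m_i\in T_{i+1}$ and hence $m_i>m_{i+1}$, so a directed cycle is impossible), and invoke the maximum-acyclic-induced-subgraph bound. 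The two approaches buy different things. Your converse certifies optimality against \emph{all} delivery schemes, linear or not, whereas the paper's sketch compares $\mathfrak{F}_{o}$ only against the best clique-cover code and is therefore formally weaker unless one separately argues that clique covers are optimal for this particular instance; conversely, the paper's construction gives the concrete structural insight into where the apparently missed coding opportunities go and why exploiting them cannot reduce the bottleneck term of each multicast. The caveats you flag --- independence of the segments conditioned on a placement realization, concentration of segment sizes as $F\to\infty$, and the fact that the MAIS bound is a genuine converse --- are exactly the points that must be handled to make either argument fully rigorous, and your choice of the acyclic witness (one segment per group, dictated by $\inf U$ via Lemma~\ref{lem:orderedsize}) is precisely what makes the converse tight.
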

\begin{proof}
	Here is a proof sketch of Theorem~\ref{thm:optcode}, which is based on the concept of the index coding~\cite{indexcoding1}. As shown before, the placement phase scheme of Algorithm 1 will produce $N\cdot 2^K$ segments, and during the delivery phase, the users will request different contents in the worst case. We regard each segment as the node and there exists an arc from node $i$ to node $j$ if only if content segment $i$ has been stored in the user that requests content $j$. Besides, there also exists an arc between node $V_{k,U}$ and $V_{k,U'}$ if only if $U\subset U'$ or $U' \subset U$, because these two kinds of segment can be regarded in the same node in some cliques. Then, the optimal scheme is to find the maximal set cover\footnote{Here the set cover and cliques mentioned below are the concepts of the index coding and have the same meaning in the graph theory, further details can be seen in~\cite{indexcoding1}} of such graph according to the index coding theory, based on which we design the optimal coding scheme. Finally we prove that the delivery scheme in Algorithm 1 will incur identical traffic volume compared to the optimal coding scheme, Theorem~\ref{thm:optcode} follows. See~\cite{techreport} for the details.
\end{proof}

Theorem~\ref{thm:optcode} shows the optimality of coding scheme of Algorithm 1 with respect to the given caching procedure. Here is an illustrative example.

\begin{figure}[htb]
\vspace{-.8cm}
 \centerline{ \includegraphics[angle=0,width=0.60\textwidth]{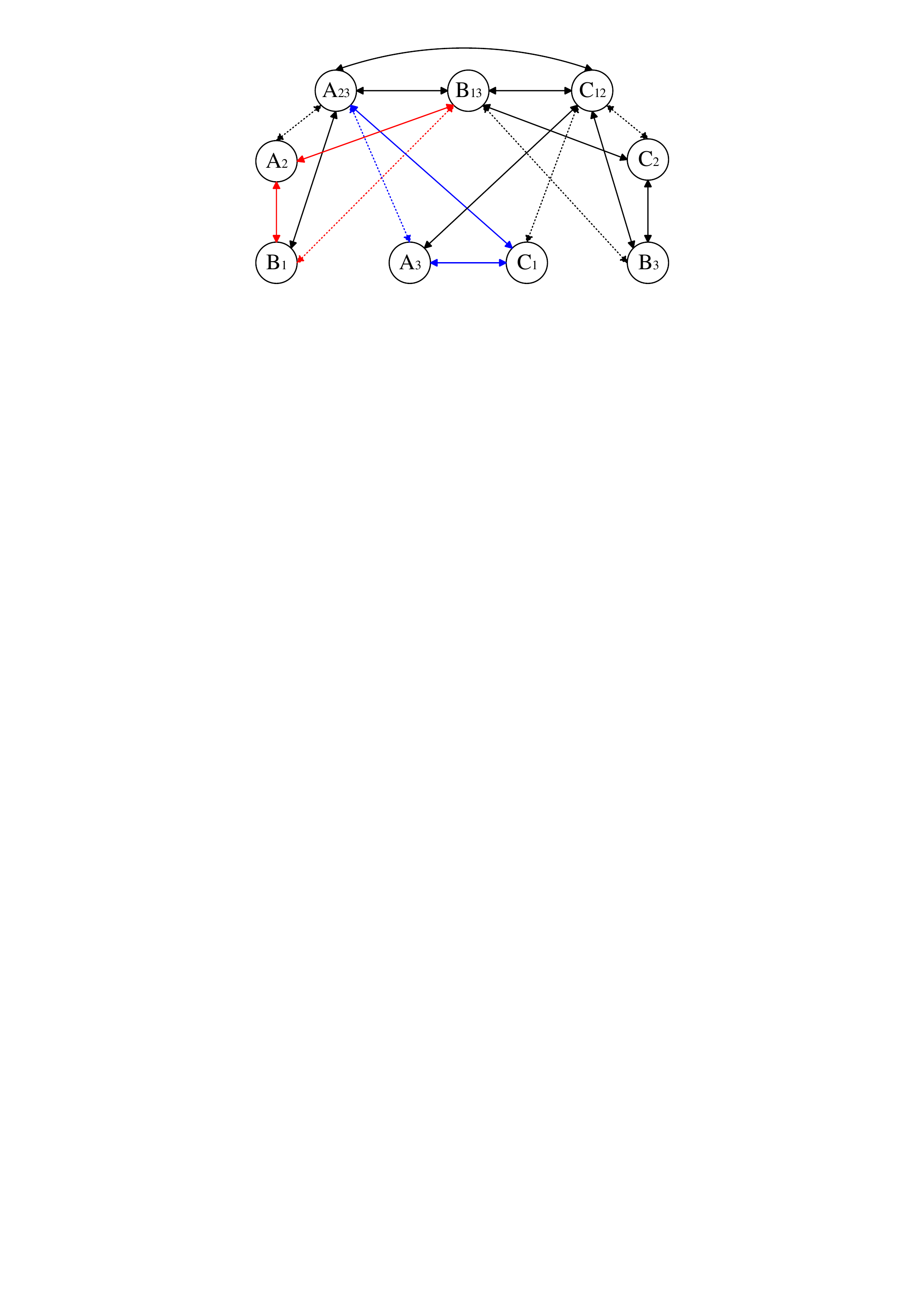}}
 \vspace{-12.2cm}
  \caption{Conflict graph for $K=3$ and $N=3$.}
  \label{fig:optcode}
   \vspace{-0.3cm}
 \end{figure}

\textbf{Example 4} (Optimality of the coded delivery scheme) Suppose the system is distributing $N=3$ contents A, B, and C to $K=3$ users with $M_{1}\leq M_{2} \leq M_{3}$. In the worst-case, the users will request content A, B and C. Then we construct the graph based on above proof procedure as illustrated in the Fig. Under the coded delivery scheme in Algorithm 1, it only encodes the following four cliques: $(A_{23}, B_{13}, C_{12})$, $(A_{1}, B_{2})$, $(A_{3}, C_{1})$ and $(B_{3}, C_{2})$. However, there exists other larger cliques such as $(B_{13}, A_{2}, B_{1})$ and $(A_{23}, A_{3}, C_{1})$ as shown in the red and blue line in the Fig~\ref{fig:optcode}. Here we draw a dotted line between node $B_{13}$ and $B_{1}$, since they can be regarded as the same node in the clique $(B_{13}, A_{2}, B_{1})$. We can find that larger cliques always occur among the different type segments, i.e., $A_{ij}$ and $A_{i}$. Thus, we can design the following optimal coding scheme that always pads the useful bits subtracting from $A_{ij}$ to $A_{i}$. Consider the bit-wise operation between $A_{2}$ and $B_{1}$, since the length of $B_{1}$ is shorter than $A_{2}$, the optimal coding scheme will pad the useful bits subtracting from segment $B_{13}$ to $B_{1}$, then the length of $B_{13}$ will be reduced. Based on this procedure, we have the results shown in TABLE~\ref{tab:example3}.

\begin{table}[htb]
\vspace{-0.3cm}
\caption{The coded data of scheme $\mathfrak{F}_{o}$ and optimal scheme} \vspace{-0.5cm}
\begin{center}
%\begin{tabular}{|c|c|}
\begin{tabular}{|c|c|c|}
\hline
Index & Scheme $\mathfrak{F}_{o}$ & Optimal system\\
\hline \hline
1 & $A_{2} \oplus B_{1}$ & $A_{2} \oplus \{B_{1}'=B_{1}+B_{13}'\}$ \\
\hline
2 & $A_{3} \oplus C_{1}$ & $A_{3} \oplus \{C_{1}'=C_{1}+C_{12}'\}$ \\
\hline
3 & $B_{3} \oplus C_{2}$ & $B_{3} \oplus \{C_{2}'=C_{2}+C_{13}'\}$ \\
\hline
4 & $A_{23} \oplus B_{13} \oplus C_{12}$ & $A_{23}\oplus \{B_{13}-B_{13}'\}\oplus \{C_{12}-C_{12}'\}$\\
\hline
5 & $A_{{\O}},  B_{{\O}}, C_{{\O}}$ & $A_{{\O}}, B_{{\O}}, C_{{\O}}$\\
\hline
\end{tabular}
\end{center}
 \label{tab:example3}
 \vspace{-0.5cm}
 \end{table}

It can be seen that, in the 4th transmission, the length of $B_{13}$ and $C_{12}$ have been reduced due to the useful padding of previous transmission. However, the traffic volume produced in 4th transmission is not reduced because the length of the longest segment $A_{23}$ is not reduced. Thus, the scheme $\mathfrak{F}_{o}$ produces the identical traffic volume of the optimal scheme. The strict proof and details can be seen in.

\begin{lemma}(Traffic-memory trade-off)\label{lem:trafficformula1}
For caching problem with $N$ contents, $K$ users each with ordered cache set $\mathcal{M}=\{M_1,M_2,\ldots, M_K\}$, the scheme $\mathfrak{F}_{o}$ produced the traffic volume of
\begin{align}
R_{\mathfrak{F}_{o}}(\mathcal{M},N,K)&=\sum\limits_{s=1}^{K}\sum\limits_{i=1}^{s}\left[ \left(\prod\limits_{j=i+1}^{K}\frac{M_{j}}{N}\right)^{\dag}\cdot\prod\limits_{j=1}^{i}\left(1-\frac{M_{j}}{N}\right)\cdot\right.\notag\\
&\left. \sum\limits_{i+1\leq j_{1}\leq\cdots \leq j_{s-i} \leq K} \prod_{k=1}^{s-i}\left(\frac{N-M_{j_k}}{M_{j_k}}\right) \right]
\end{align}\label{eq:closetraffic}
where the $(x)^{\dag}=1$ if $x=0$; else $(x)^{\dag}=x$.
\end{lemma}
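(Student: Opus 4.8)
The plan is to compute $R_{\mathfrak{F}_o}(\mathcal{M},N,K)$ directly by summing the sizes of all transmission signals $X_U$ produced by Algorithm 1, organized according to the cardinality $s=|U|$. By definition of the delivery phase, the total traffic volume is $R_{\mathfrak{F}_o}=\sum_{s=1}^{K}\sum_{U:|U|=s}|X_U|$, where the outer sum ranges over subset sizes and the inner sum over all $U\subseteq[K]$ of that size. Lemma~\ref{lem:orderedsize} is the key reduction: it tells us that $|X_U|=|V_{\inf U,\,U\setminus\{\inf U\}}|$, so each signal's size is governed solely by the smallest-indexed (smallest-cache) user in $U$. Because the caching is decentralized and each user $k$ prefetches a fraction $M_k/N$ of each content independently, a segment $V_{k,S}=W_{d_k,S}$ (the part of user $k$'s requested content cached exactly by the users in $S$ and no others, among those in $U$) has expected size $F\cdot(1-M_k/N)\prod_{j\in S}(M_j/N)\prod_{j\notin S\cup\{k\}}(1-M_j/N)$ in the full $K$-user placement. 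The first step, then, is to write down this per-segment size formula explicitly and substitute $k=\inf U$ and $S=U\setminus\{k\}$.

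Next I would reindex the double sum. Fix the smallest element of $U$ to be index $i$ (so $i=\inf U$ and $M_i$ is the smallest cache in the group); then the remaining $s-1$ elements of $U$ are chosen from $\{i+1,\ldots,K\}$. This is exactly the origin of the outer structure $\sum_{s=1}^{K}\sum_{i=1}^{s}$ once one accounts for the constraint that a set of size $s$ with minimum element $i$ requires at least $s-1$ larger indices available. Writing $U=\{i\}\cup\{j_1,\ldots,j_{s-1}\}$ with $i+1\le j_1<\cdots<j_{s-1}\le K$, the factor $(1-M_i/N)$ from user $i$ being a non-cacher of its own requested segment's relevant part, together with the product over the cached set and the uncached complement, must be regrouped. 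The plan is to collect the ``always present'' factor $\prod_{j=1}^{i}(1-M_j/N)$ (covering user $i$ and all strictly smaller indices, none of which lie in $U$ and hence all of which fail to cache the segment), the factor $\left(\prod_{j=i+1}^{K}M_j/N\right)^{\dag}$ arising when the segment is cached by the full upper block, and then a correction sum $\sum_{i+1\le j_1\le\cdots\le j_{s-i}\le K}\prod_{k=1}^{s-i}\frac{N-M_{j_k}}{M_{j_k}}$ that records which of the upper-block users are in $U$ versus merely caching-or-not; the ratio $(N-M_{j_k})/M_{j_k}=(1-M_{j_k}/N)/(M_{j_k}/N)$ is precisely the bookkeeping factor that converts a ``cached'' slot into an ``uncached'' one. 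Matching these three grouped factors term-by-term against the claimed closed form is the core computation.

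The main obstacle I expect is the combinatorial reconciliation between the index-set description of $U$ (strict inequalities $j_1<\cdots<j_{s-1}$ over the $s-1$ members of $U$ beyond $i$) and the target expression's auxiliary sum, which runs over $s-i$ indices with \emph{non-strict} inequalities $j_1\le\cdots\le j_{s-i}$ and is indexed by $s-i$ rather than $s-1$. Resolving this requires interpreting the $(N-M_{j_k})/M_{j_k}$ factors not as a choice of which users belong to $U$, but as a sum-over-configurations that, after expansion, reproduces the inclusion pattern of the segment's cached/uncached status across the block $\{i+1,\ldots,K\}$; the shift from $s-1$ to $s-i$ and the appearance of the degenerate guard $(\,\cdot\,)^{\dag}$ both signal that an algebraic identity (rather than a bijection) is being invoked to fold the product $\prod_{j=i+1}^{K}(M_j/N)$ over the full block together with the ratio-sum so that only the members of $U$ survive with a $(1-M_j/N)$ weight. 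I would verify this folding by checking the $K=3$ case against the five signal sizes listed in Example~3, which should serve both as a sanity check on the reindexing and as the template for the general grouping. Once the factorization of a single term is confirmed and the reindexing of the summation ranges is established, the lemma follows by linearity of expectation over all signals $X_U$.
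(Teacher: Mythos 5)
Your overall strategy --- sum $|X_U|$ over all multicast groups $U$, invoke Lemma~\ref{lem:orderedsize} to reduce $|X_U|$ to the single segment $|V_{\inf U,\,U\setminus\{\inf U\}}|$, write that segment's expected size as $\prod_{j\in U\setminus\{\inf U\}}(M_j/N)\cdot\prod_{j\notin U\setminus\{\inf U\}}(1-M_j/N)$, and reindex by the minimum element of $U$ --- is exactly the ``natural summarization'' of Algorithm~1 that the paper performs, and it is the right one. The gap is in how you match this against the stated formula. You set $s=|U|$ and claim the inner range $\sum_{i=1}^{s}$ arises because a size-$s$ set with minimum element $i$ needs $s-1$ larger indices available; but that constraint reads $i\le K-s+1$, not $i\le s$, so your parametrization cannot reproduce the displayed summation limits. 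In the lemma the index $s$ is \emph{not} $|U|$: the term $(s,i)$ with auxiliary indices $j_1,\dots,j_{s-i}$ corresponds to the group $U=\{i\}\cup\bigl(\{i+1,\dots,K\}\setminus\{j_1,\dots,j_{s-i}\}\bigr)$, which has cardinality $K-s+1$ (equivalently, $s$ tracks the iteration of the delivery loop, which runs over group sizes from $K$ down to $1$). Under this reading the condition $i\le s$ is precisely the availability constraint $|U|-1\le K-i$, and each ratio $(N-M_{j_k})/M_{j_k}$ is a bijective bookkeeping device rather than the ``algebraic identity'' you speculate about: it deletes $j_k$ from $U$ by converting the cached weight $M_{j_k}/N$ in the baseline product $\prod_{j=i+1}^{K}(M_j/N)$ into the uncached weight $1-M_{j_k}/N$.

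The second issue is the one you flag but leave unresolved: the non-strict inequalities $j_1\le\cdots\le j_{s-i}$. Taken literally they produce spurious terms; for $K=3$, $s=3$, $i=1$, writing $a_j=M_j/N$ and $b_j=1-M_j/N$, the repeated pairs $(j_1,j_2)\in\{(2,2),(3,3)\}$ contribute $b_1a_3b_2^2/a_2+b_1a_2b_3^2/a_3$, which do not appear in the correct total $b_1a_2a_3+b_1a_2b_3+2b_1b_2a_3+3b_1b_2b_3$ and are not even polynomial in the cache fractions. With distinct indices (a genuine choice of which upper-block users are excluded from $U$) the formula does check out against Example~3 and against Theorem~\ref{thm:trafficformula2}. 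So the sanity check you propose at the end is not optional: it is what reveals both the correct meaning of $s$ and the need to exclude repeated indices. As written, your proof would stall at exactly the ``core computation'' you identify, because the term-by-term matching is set up against the wrong parametrization.
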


Lemma~\ref{lem:trafficformula1} presents a possible formula to calculate the traffic volume under ordered cache set $\mathcal{M}$, which is based on the natural summarization of the delivery scheme in Algorithm 1. The summarization procedure is tedious and details can be seen in~\cite{techreport}. However, the trade-off formula in Lemma~\ref{lem:trafficformula1} is difficult to calculate due to the exponential number of terms in the summation. And we cannot get any useful insight, i.e., how the user cache sizes influence the traffic volume, from such trade-off. Fortunately, this formula shows an useful iterative structure in aspect of number of users $K$ that we can utilize to get a close-form expression of the traffic volume under scheme $\mathfrak{F}_{o}$ and show some basic insights.

\begin{theorem}(Close-form expression of the traffic-memory trade-off)\label{thm:trafficformula2}
For caching problem with $N$ contents, $K$ users, with ordered cache set $\mathcal{M}=\{M_1,M_2,\ldots, M_K\}$, the scheme $\mathfrak{F}_{o}$ produces the traffic volume of,
\begin{equation}\label{eq:trafficformula2}
R_{\mathfrak{F}_{o}}(\mathcal{M},N,K)=\sum\limits_{i=1}^{K}\left[\prod\limits_{j=1}^{i}\left(1-\frac{M_{j}}{N}\right)\right].
\end{equation}
\end{theorem}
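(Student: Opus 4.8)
The plan is to bypass the unwieldy expression of Lemma~\ref{lem:trafficformula1} and instead evaluate the total load directly as a sum over multicast groups, exploiting Lemma~\ref{lem:orderedsize}. Since scheme $\mathfrak{F}_{o}$ transmits one coded signal $X_U$ for every nonempty $U\subseteq[K]$, the worst-case normalized load is
\begin{equation*}
R_{\mathfrak{F}_{o}}(\mathcal{M},N,K)=\sum_{\emptyset\neq U\subseteq[K]}|X_U|=\sum_{\emptyset\neq U\subseteq[K]}\left|V_{\inf U,\,U/\{\inf U\}}\right|,
\end{equation*}
where the second equality is exactly Lemma~\ref{lem:orderedsize}. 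Under the decentralized placement each bit of a content lands in user $j$'s cache independently with probability $M_j/N$, so the segment cached by precisely the user set $T$ has normalized size $\prod_{j\in T}(M_j/N)\prod_{j\notin T}(1-M_j/N)$. I would substitute this into the display above, which reduces the entire claim to a purely combinatorial identity over subsets.

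Then I would group the subsets by their minimum element. Writing $i=\inf U$ and using that $\mathcal{M}$ is ordered, every nonempty $U$ is uniquely $U=\{i\}\cup T$ with $T\subseteq\{i+1,\ldots,K\}$ and $U/\{i\}=T$, so
\begin{equation*}
\left|V_{i,T}\right|=\prod_{j=1}^{i}\left(1-\frac{M_j}{N}\right)\cdot\prod_{j\in T}\frac{M_j}{N}\cdot\prod_{j\in\{i+1,\ldots,K\}\setminus T}\left(1-\frac{M_j}{N}\right),
\end{equation*}
because $i\notin T$ contributes the factor $1-M_i/N$ and the indices $1,\ldots,i-1$ all lie outside $T$ as well. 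Summing over all $T\subseteq\{i+1,\ldots,K\}$, the leading factor $\prod_{j=1}^{i}(1-M_j/N)$ pulls out and the remaining sum factorizes as $\prod_{j=i+1}^{K}\bigl(\frac{M_j}{N}+(1-\frac{M_j}{N})\bigr)=1$. This is the crux: because of the zero-padding, $|X_U|$ depends only on the smallest-index (smallest-cache) user, so summing over every caching pattern of the larger-cache members of $U$ collapses a complete distribution to $1$. What survives is $\sum_{i=1}^{K}\prod_{j=1}^{i}(1-M_j/N)$, which is precisely~\eqref{eq:trafficformula2}.

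The step I expect to require the most care is not the algebra but the bookkeeping that makes the collapse legitimate: verifying that grouping by $\inf U$ enumerates every nonempty subset exactly once (including the singletons $U=\{i\}$, which correspond to the uncoded residual segments $V_{i,\emptyset}$), and that the product ranges in the telescoped sum are handled correctly at the boundary $i=K$, where the index set $\{i+1,\ldots,K\}$ is empty and contributes the single term $\prod_{j=1}^{K}(1-M_j/N)$. As a cross-check I would reconcile this with Lemma~\ref{lem:trafficformula1}, whose outer index plays the same role as $i$ here; an alternative, fully self-contained derivation is induction on $K$, where appending the largest-cache user $M_K$ increments the load by exactly the single new term $\prod_{j=1}^{K}(1-M_j/N)$. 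Establishing that increment, however, again reduces to the same subset identity, so the direct summation above is the more economical route.
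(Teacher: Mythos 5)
Your argument is correct, and it reaches \eqref{eq:trafficformula2} by a genuinely different route than the paper. The paper's proof is an induction on the number of users $K$: it takes the exponential-sum expression of Lemma~\ref{lem:trafficformula1} as its starting point, assumes the closed form holds for $K-1$ users, and shows that introducing user $K$ adds exactly the single new term $\prod_{j=1}^{K}(1-M_j/N)$. You instead bypass Lemma~\ref{lem:trafficformula1} altogether and compute $\sum_{\emptyset\neq U\subseteq[K]}|X_U|$ directly: Lemma~\ref{lem:orderedsize} pins $|X_U|$ to the minimum-index user of $U$, the decentralized placement gives each segment size as a product of Bernoulli factors, and grouping subsets by $i=\inf U$ makes the inner sum over $T\subseteq\{i+1,\ldots,K\}$ a complete distribution that collapses to $1$. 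Your bookkeeping is sound --- every nonempty $U$ is counted once as $\{i\}\cup T$, the singletons correctly yield the uncoded residuals $V_{i,\emptyset}$ of size $\prod_{j=1}^{K}(1-M_j/N)$, and the boundary case $i=K$ is handled --- and one can sanity-check the result against Example 3. What your route buys is economy and transparency: it exposes the structural reason for the closed form (zero-padding makes each signal's length depend only on the smallest cache in the group, so the contributions of the larger-cache members integrate out), whereas the paper's induction buys the incremental interpretation that each additional user contributes exactly one new term, which the paper then exploits for its equivalent hierarchical-unicast network model. The only caveat, shared with the paper, is that identifying segment lengths with their expected values is a concentration statement valid as $F\to\infty$; at the paper's level of rigor this is not a gap.
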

\begin{proof}
Here is a proof sketch of Theorem~\ref{thm:trafficformula2}. We use the mathematical induction method to simplify and induction basis is the number of users. First, the traffic volume under $k=1$ users is $1-M_1/N$; then we assume the traffic volume under $k=K-1$ users is $\sum_{i=1}^{K-1}\left[\prod_{j=1}^{i}\left(1-\frac{M_{j}}{N}\right)\right]$ and prove the traffic volume has the form of $(\ref{eq:trafficformula2})$ when $k=K$ based on the trade-off (\ref{eq:closetraffic}) and assumption when $k=K-1$, Theorem~\ref{thm:trafficformula2} follows. See~\cite{techreport} for the details.
\end{proof}

The proof of Theorem~\ref{thm:trafficformula2} and the traffic volume formula (\ref{eq:trafficformula2}) provide us with an observation that, in the aspects of $s$th transmission, the traffic volume due to the introduction of user $K$ can be seen as the average combination of $(s-1)$th and $s$th transmissions when there are $(K-1)$ users. The main reason is that the random caching in the placement phase and coded multicasting procedure in the delivery phase ``connects'' the users' local caches together, such that the requested sent from a new user can be not only satisfied by its own cache but also by other users' caches.

\subsection{Discussion}

We use the close-form trade-off to derive the following equivalent network model and scheme.

\begin{figure}[htb]
\vspace{-0.6cm}
 \centerline{ \includegraphics[angle=0,width=0.5\textwidth]{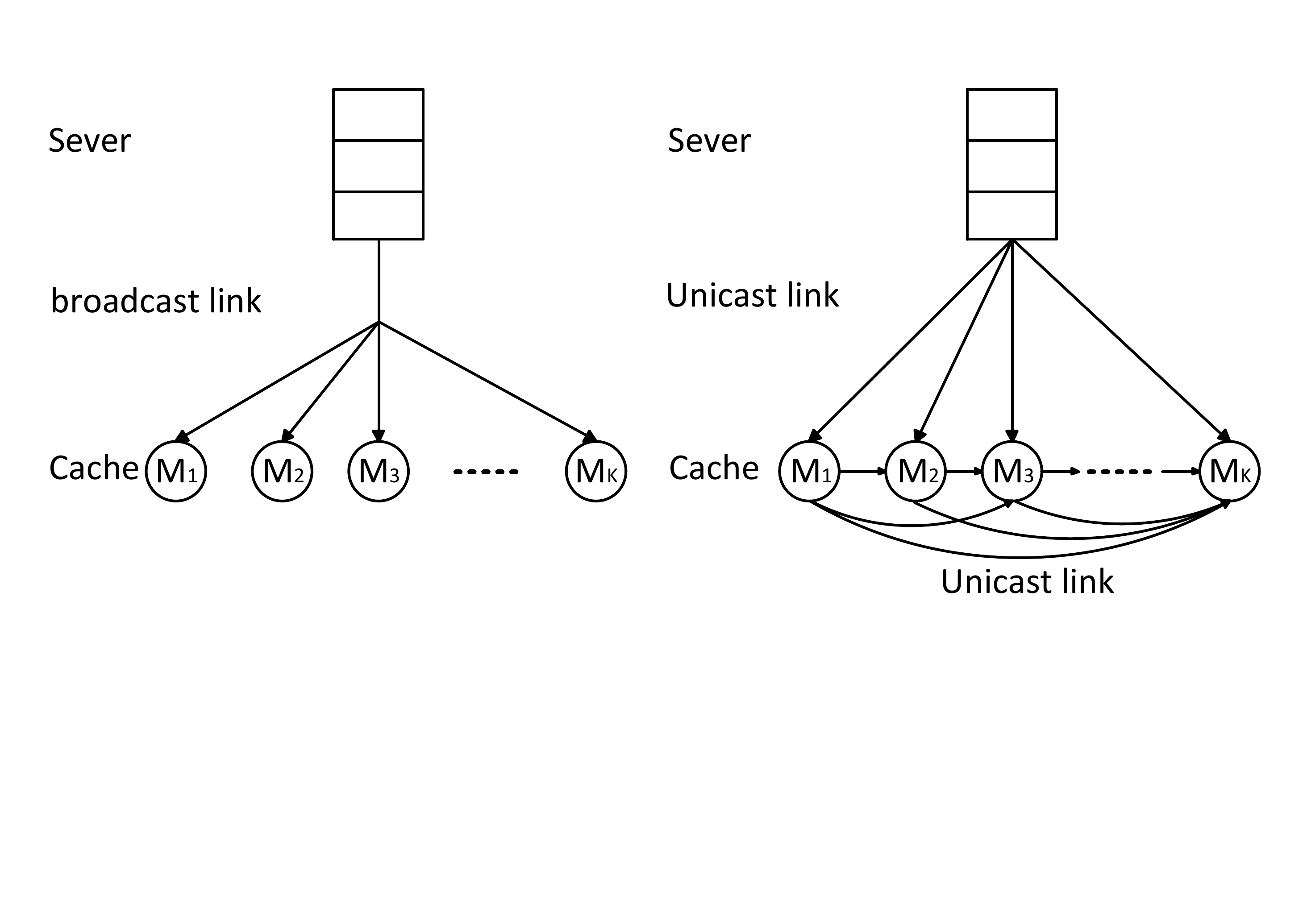}}
 \vspace{-2.5cm}
  \caption{Equivalent network diagram of coded caching.}
  \label{fig:equalmodel}
   \vspace{-0.3cm}
 \end{figure}

As shown in the right part of Fig.~\ref{fig:equalmodel}, the equivalent network diagram consists of a content server connecting to $K$ users through an error-free unicasting link. Each user can utilize the contents that being stored in the users having smaller cache size, and communicate with them by an unicasting link. The caching procedure is same as scheme $\mathfrak{F}_{o}$ and the delivery phase uses the hierarchical unicasting transmission. When user $k$ sends its request $d_{k}$, it first uses its local cache to reconstruct part of its requested content $W_{d_{k}}$, then it successively uses user $(k-1)$ to $1$'s local cache to reconstruct $W_{d_{k}}$, finally the server unicasts the part of content $W_{d_{k}}$ that are not stored in user $1$ to $k$'s local cache to user $k$. Thus, the traffic volume incurred in the unicast link between server and user $k$ is
\begin{equation*}
	\prod\limits_{i=1}^{K}\left(1-\frac{M_{i}}{N}\right).
\end{equation*}
Summming the traffic volume produced by all users, we can get the traffic formula that is identical to (\ref{eq:trafficformula2}).

From this equivalent network diagram, we can see more clearly how the coded multicast plays a role in ``connecting'' user’s local cache. For the traditional uncoded cache, each user only uses its local cache, while for the scheme Fo, each user can access other user’s cache that has a smaller cache size.

\begin{corollary}(Traffic volume under uniform cache set)\label{coro:uniformtraffic}
When the size of each user's local cache all equals to $M$, and the corresponding cache set is $\mathcal{M}_{u}=\{M,\ldots,M\}$. The number of users is $K$, we have,
\begin{equation*}
R_{\mathfrak{F}_{o}}(\mathcal{M}_{u},N,K)=K\left(1-\frac{M}{N}\right)\frac{N}{KM}\left[1-\left(1-\frac{M}{N}\right)^K\right],
\end{equation*}
\end{corollary}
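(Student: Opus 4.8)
The plan is to specialize the closed-form expression of Theorem~\ref{thm:trafficformula2} to the uniform cache set $\mathcal{M}_u=\{M,\ldots,M\}$ and recognize the resulting sum as a finite geometric series. First I would substitute $M_j=M$ for every $j\in\{1,\ldots,K\}$ into~(\ref{eq:trafficformula2}). Each inner product then collapses, since
\begin{equation*}
\prod_{j=1}^{i}\left(1-\frac{M_{j}}{N}\right)=\left(1-\frac{M}{N}\right)^{i},
\end{equation*}
so that the traffic volume reduces to the single sum $\sum_{i=1}^{K}\left(1-M/N\right)^{i}$.

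The next step is to evaluate this geometric series with common ratio $x\triangleq 1-M/N$. Since $0\le M\le N$ we have $0\le x\le 1$ (the degenerate case $M=N$, giving $x=0$ and trivially zero traffic, is handled by continuity or direct inspection), and the standard summation
\begin{equation*}
\sum_{i=1}^{K}x^{i}=x\,\frac{1-x^{K}}{1-x}
\end{equation*}
applies. Observing that $1-x=M/N$, the denominator becomes $M/N$ while the factor $x=1-M/N$ remains in the numerator, which yields
\begin{equation*}
R_{\mathfrak{F}_{o}}(\mathcal{M}_u,N,K)=\left(1-\frac{M}{N}\right)\frac{N}{M}\left[1-\left(1-\frac{M}{N}\right)^{K}\right].
\end{equation*}

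Finally I would rewrite $N/M$ as $K\cdot N/(KM)$ to cast the expression in the per-user form stated in the corollary; the two are algebraically identical, and the factored $K$ merely exposes that the aggregate load is $K$ times an average per-user contribution, consistent with the equivalent unicasting interpretation of Fig.~\ref{fig:equalmodel}. There is essentially no hard step here: the whole argument is a substitution followed by a geometric-series identity, so the only point requiring any care is the boundary behaviour at $M=N$, where $x\to 0$ and both the closed form and the original sum evaluate to $0$.
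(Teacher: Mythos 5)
Your proposal is correct and is exactly the derivation the paper intends: the corollary is the specialization of the closed-form trade-off of Theorem~\ref{thm:trafficformula2} to $M_j\equiv M$, reducing to the geometric series $\sum_{i=1}^{K}(1-M/N)^{i}$ and the identity $N/M = K\cdot N/(KM)$. The paper itself omits the computation and merely cites prior work, so your write-up supplies the missing routine steps; the only cosmetic point is that the degenerate case $M=0$ (where the geometric-series formula and the stated closed form are both indeterminate) deserves the same one-line remark you gave for $M=N$.
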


This result has been exhibited in previous work~\cite{codedcaching2}, which can be regarded as the special case in our regime that the users' cache sizes are homogeneous.

\begin{corollary}(Traffic volume under singularity cache size)\label{coro:singletraffic}
When the the cache set is $\mathcal{M}_{s}=\{M,M,\ldots, (1+\alpha)M\}, \alpha>0$, and the number of users is $K$, we have,
\begin{equation*}
R_{\mathfrak{F}_{o}}(\mathcal{M}_{s},N,K)=R(\mathcal{M}_{u},N,K)-\alpha\frac{M}{N}\left(1-\frac{M}{N}\right)^{K-1}.
\end{equation*}
\end{corollary}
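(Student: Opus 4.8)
The plan is to invoke the closed-form traffic formula of Theorem~\ref{thm:trafficformula2} for both the singularity cache set and the uniform cache set, and then compare them term by term. Since $\alpha > 0$ forces $(1+\alpha)M \geq M$, the singularity cache set $\mathcal{M}_{s}=\{M,M,\ldots,(1+\alpha)M\}$ is already ordered, so the hypothesis of Theorem~\ref{thm:trafficformula2} is satisfied and I may apply the formula directly with $M_{j}=M$ for $j=1,\ldots,K-1$ and $M_{K}=(1+\alpha)M$.

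First I would write out
\begin{equation*}
R_{\mathfrak{F}_{o}}(\mathcal{M}_{s},N,K)=\sum_{i=1}^{K}\prod_{j=1}^{i}\left(1-\frac{M_{j}}{N}\right)
\end{equation*}
and observe that the inner product only ever contains the distinguished factor $\left(1-M_{K}/N\right)$ when the outer index reaches $i=K$. Hence for every $i\leq K-1$ the product equals exactly $\left(1-M/N\right)^{i}$, which is identical to the corresponding term in the uniform sum $R(\mathcal{M}_{u},N,K)=\sum_{i=1}^{K}\left(1-M/N\right)^{i}$. The two expressions therefore agree in their first $K-1$ terms, and only the $i=K$ term differs: for $\mathcal{M}_{s}$ it is $\left(1-M/N\right)^{K-1}\left(1-(1+\alpha)M/N\right)$, whereas for $\mathcal{M}_{u}$ it is $\left(1-M/N\right)^{K}$.

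The remaining step is a one-line subtraction. Taking the difference of the two $i=K$ terms and factoring out $\left(1-M/N\right)^{K-1}$ leaves the bracket $\left(1-(1+\alpha)M/N\right)-\left(1-M/N\right)=-\alpha M/N$, which produces precisely the claimed correction $-\alpha\,(M/N)\left(1-M/N\right)^{K-1}$ and gives the stated identity. There is no genuine obstacle in this argument; the only point requiring care is to confirm that the singular entry occupies the largest (last) position so that the telescoping of the $K-1$ common terms leaves exactly one residual factor, and to verify that the ordering hypothesis of Theorem~\ref{thm:trafficformula2} indeed holds for $\mathcal{M}_{s}$ before the formula is applied.
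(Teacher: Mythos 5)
Your proposal is correct and is essentially the argument the paper intends: the corollary is an immediate consequence of the closed-form trade-off in Theorem~\ref{thm:trafficformula2}, obtained by noting that the first $K-1$ partial products coincide with the uniform case and only the $i=K$ term contributes the residual $-\alpha\frac{M}{N}\left(1-\frac{M}{N}\right)^{K-1}$. Your care in checking that $\mathcal{M}_{s}$ is ordered before applying the theorem is appropriate but routine; no gap remains.
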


Corollary~\ref{coro:uniformtraffic} shows that, the increase of aggregate cache size $KM$ will reduce the traffic volume in an inverse proportional manner, as show in the term $N/KM$. However, Corollary~\ref{coro:singletraffic} shows that the increase of aggregate cache size $(K+\alpha)M$, i.e., the increase of parameter $\alpha$, will reduce the traffic in a linear manner. Thus, the effect of the heterogeneous cache set on the traffic volume has a different form compared with the homogeneous cases.

\begin{corollary}(Relation between two kinds of traffic-volume)\label{coro:uniformizationtraffic}
For caching problem with $N$ contents, $K$ users with ordered cache set $\mathcal{M}=\{M_1,M_2,\ldots, M_K\}$, we have,
\begin{equation}
R_{\mathfrak{F}_{o}}(\mathcal{M},N,K)\geq R_{\mathfrak{F}_{o}}(\mathcal{M}^{I},N,K),
\end{equation}
equal if only if $M_{k}=\mathbb{E}_{k}[M_{k}],\forall k\in \mathcal{K}$.
\end{corollary}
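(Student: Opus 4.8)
The plan is to reduce the statement to a majorization (Schur-convexity) inequality for the closed-form traffic expression of Theorem~\ref{thm:trafficformula2}, and then to establish that inequality by an equalization (Robin--Hood transfer) argument. Write $x_j \triangleq 1 - M_j/N$, so that $x_j \in [0,1]$ and the ordering $M_1 \le \cdots \le M_K$ becomes $x_1 \ge x_2 \ge \cdots \ge x_K$, while the homogeneous set $\mathcal{M}^{I}$ corresponds to the uniform vector all of whose entries equal $\overline{x} = \frac{1}{K}\sum_{j=1}^K x_j = 1 - \overline{M}/N$. By Theorem~\ref{thm:trafficformula2} the claim $R_{\mathfrak{F}_o}(\mathcal{M},N,K) \ge R_{\mathfrak{F}_o}(\mathcal{M}^{I},N,K)$ is equivalent to showing that $g(x) \triangleq \sum_{i=1}^K \prod_{j=1}^i x_j$ attains, over all sorted vectors in $[0,1]^K$ with the fixed mean $\overline{x}$, its minimum uniquely at the uniform vector $\overline{x}\mathbf{1}$.

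First I would prove a single-transfer lemma: fix every coordinate except an adjacent pair and write $x_k = \mu + t$, $x_{k+1} = \mu - t$, where $\mu$ is the pair average and $t \ge 0$ is chosen small enough that sortedness ($x_{k-1} \ge x_k \ge x_{k+1} \ge x_{k+2}$) and the box $[0,1]^K$ are preserved. Isolating the $t$-dependent terms of $g$ gives $g = \text{const} + P x_k + P S\, x_k x_{k+1}$, where $P = \prod_{j<k} x_j \ge 0$ and $S = 1 + \sum_{i \ge k+2}\prod_{j=k+2}^i x_j \ge 1$, whence $\frac{d}{dt} g = P(1 - 2St)$.

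The crux, and the main obstacle, is to show $1 - 2St \ge 0$; this is false for a generic multilinear function but holds here precisely because $x_j \le 1$ (i.e. $M_j \ge 0$) together with sortedness. Using $x_j \le x_{k+1} = \mu - t$ for every $j \ge k+2$, a geometric-series bound gives $S \le \frac{1}{1 - (\mu - t)}$, so that $2St \le \frac{2t}{1 - \mu + t}$, and this last quantity is $\le 1$ exactly because $x_k = \mu + t \le 1$. Hence $\frac{d}{dt}g \ge 0$, meaning that shrinking the gap within an adjacent pair never increases $g$; moreover the increase is strict once $t>0$, since $t>0$ forces $x_k>0$ and hence $P>0$, and then $x_k<1$ makes the bound on $2St$ strict (the sole boundary point $x_k=1$ does not break overall strict monotonicity).

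Finally I would invoke the Hardy--Littlewood--P\'olya theorem: because the uniform vector is majorized by every sorted vector of the same mean, it is reachable from $x$ by a finite sequence of adjacent Robin--Hood transfers, each keeping the vector sorted inside $[0,1]^K$ and each reducing one gap. By the transfer lemma every such step does not increase $g$, and at least one step is strict unless $x$ is already uniform; therefore $g(x) \ge g(\overline{x}\mathbf{1})$, with equality if and only if $x_1 = \cdots = x_K$, that is $M_k = \mathbb{E}_k[M_k]$ for all $k \in \mathcal{K}$. Translating back through $x_j = 1 - M_j/N$ then yields Corollary~\ref{coro:uniformizationtraffic}.
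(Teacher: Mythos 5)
Your core computation is correct and is essentially the paper's own argument: the paper likewise proves the corollary by showing that replacing an adjacent pair $(M_i,M_{i+1})$ by two copies of its average strictly decreases $R_{\mathfrak{F}_o}$, and then arguing that the constrained minimum must therefore be attained at the homogeneous cache set. Your single-transfer lemma is in fact more complete than the paper's sketch: the decomposition $g=\mathrm{const}+Px_k+PSx_kx_{k+1}$, the derivative $P(1-2St)$, and the observation that $2St\le \frac{2t}{1-\mu+t}\le 1$ precisely because $x_k=\mu+t\le 1$ (i.e.\ $M_k\ge 0$) is the substantive step, and it is right, including the strictness claim $g(t)-g(0)=Pt(1-St)>0$ for $t>0$.

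The gap is in your final paragraph. It is not true that the uniform vector is reachable from an arbitrary sorted vector of the same mean by a \emph{finite} sequence of \emph{adjacent} Robin--Hood transfers each of which preserves sortedness. Take $K=3$ and $x=(3,2,1)/4$ with mean $1/2$: a sortedness-preserving transfer from position $1$ to position $2$ leaves the new $x_1\ge (x_1+x_2)/2$, and a transfer between positions $2$ and $3$ does not touch $x_1$; since $x_1+x_2>1$ holds exactly as long as $x_3<1/2$, and $x_3=1/2$ (being the minimum with mean $1/2$) already forces uniformity, $x_1$ can never reach $1/2$ in finitely many steps. The Hardy--Littlewood--P\'olya T-transform decomposition does reach the target in at most $K-1$ steps, but its transfers pair the outermost off-target coordinates (here positions $1$ and $3$), which are generally not adjacent, and your lemma as proved does not cover non-adjacent pairs. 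The repair is immediate and is what the paper does: the feasible set $\{1\ge x_1\ge\cdots\ge x_K\ge 0,\ \sum_j x_j=K\bar{x}\}$ is compact and $g$ is continuous, so a minimizer exists; if it were non-uniform, fully averaging some adjacent unequal pair would remain feasible and strictly decrease $g$ by your lemma, a contradiction, and uniqueness of the equality case follows from the same strictness. With that substitution your proof is complete and coincides with the paper's route.
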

\begin{proof}
	Here is a proof sketch of Corollary~\ref{coro:uniformizationtraffic}. We regard this problem as the conditional minimum of $R_{\mathfrak{F}_{o}}(\mathcal{M},N,K)$ with respect to the summation of $M_{i}$ is constant. We first prove the following operation can strictly reduce the value of function: $M_{i}'=(M_{i}+M_{i+1})/2$ and $M_{i+1}'=(M_{i}+M_{i+1})/2$. Then, we use the contradictory induction to prove above function attains the minimum when each $M_{i}$ is equal. See~\cite{techreport} for more details.
\end{proof}

The Corollary~\ref{coro:uniformizationtraffic} is easy to understand for the scheme $\mathfrak{F}_{o}$, since the there is no miss coding phenomenon when the size of each user's local cache is identical, the traffic volume will be minimized.

\subsection{Order Optimality Analysis}

Based on above close-form expression of the traffic volume-memory trade-off and the cut-set bound, we have
\begin{theorem}(Order optimality of scheme $\mathfrak{F}_{o}$)\label{thm:order}
For caching problem with $N$ contents, $K(K\leq N)$ users with heterogenous cache set $\mathcal{M}=\{M_1,M_1,\ldots, M_K\}$, we have,
\begin{equation}
\frac{R_{\mathfrak{F}_{o}}(\mathcal{M},N,K)}{R^{*}(\mathcal{M},N,K)} \leq 12.
\end{equation}
\end{theorem}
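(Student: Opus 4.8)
The plan is to compare the closed-form rate of Theorem~\ref{thm:trafficformula2} with the cut-set bound of Theorem~\ref{thm:cutsetbound} directly. Writing $a_j=1-M_j/N\in[0,1]$ and $P_i=\prod_{j=1}^i a_j$, the ordering $M_1\le\cdots\le M_K$ makes $a_j$ nonincreasing and $P_i$ nonincreasing, and it reduces the inner maximization of the bound to $R^{c}\ge \max_s\{\,s-S_s/\lfloor N/s\rfloor\,\}$ with $S_s=\sum_{i=1}^s M_i$. The first step is to control the achievable rate $R_{\mathfrak{F}_o}=\sum_{i=1}^K P_i$ by a quantity that sees only the cumulative caches $S_i$, since those are exactly what appears in the bound. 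Applying AM--GM to each product gives
\[
P_i=\prod_{j=1}^i a_j\le\Big(1-\tfrac{S_i}{iN}\Big)^{i}\le e^{-S_i/N},
\]
so that $R_{\mathfrak{F}_o}\le\sum_{i=1}^K e^{-S_i/N}$. This converts the product structure into a decay governed by the $S_i$ and is the device that lets the heterogeneous rate and the heterogeneous bound be compared on the same footing.

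Next I fix a threshold $\theta\in(0,\tfrac12)$ and let $s^{\dagger}$ be the largest index with $S_{s^{\dagger}}\le\theta N$. Evaluating the cut-set bound at $s=s^{\dagger}$ and using $\lfloor N/s\rfloor\ge N/(2s)$ when $s\le N/2$ (and $\lfloor N/s\rfloor\ge 1$ with $\theta N<2\theta s^{\dagger}$ when $s^{\dagger}>N/2$) yields $R^{c}\ge s^{\dagger}(1-2\theta)$ in the main regime $s^{\dagger}\ge 1$. The structural fact that drives everything is that the definition of $s^{\dagger}$ forbids the next cache from being too small: since $M_i\le M_{s^{\dagger}+1}$ for $i\le s^{\dagger}+1$ we have $\theta N< S_{s^{\dagger}+1}\le (s^{\dagger}+1)M_{s^{\dagger}+1}$, hence $M_{s^{\dagger}+1}>\theta N/(s^{\dagger}+1)$, and by the ordering every later increment is at least this large.

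I then split the upper bound for $R_{\mathfrak{F}_o}$ at $s^{\dagger}$. The head $\sum_{i\le s^{\dagger}}e^{-S_i/N}$ is at most $s^{\dagger}$ because each term is $\le 1$. For the tail $i>s^{\dagger}$ the lower bound on the increments gives $S_i> \theta N+(i-s^{\dagger}-1)\,\theta N/(s^{\dagger}+1)$, so $e^{-S_i/N}$ decays geometrically and $\sum_{i>s^{\dagger}}e^{-S_i/N}$ is a geometric series of ratio $e^{-\theta/(s^{\dagger}+1)}$, summing to $O\!\big(\tfrac{e^{-\theta}}{\theta}(s^{\dagger}+1)\big)$. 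Combining head and tail gives $R_{\mathfrak{F}_o}\le C(\theta)\,s^{\dagger}$ for an explicit $C(\theta)$, so the ratio is at most $C(\theta)/(1-2\theta)$; choosing $\theta$ near $1/4$ drives this constant down to $12$, matching the homogeneous Maddah-Ali--Niesen gap. The two degenerate pivots are handled separately: if $s^{\dagger}=0$ then $M_1>\theta N$ and I bound $R_{\mathfrak{F}_o}\le N/M_1-1$ against the $s=1$ bound $1-M_1/N$ (splitting further at $M_1=N/2$), while if $s^{\dagger}=K$ the total cache is at most $\theta N$ and the $s=K$ bound with $K\le N$ (so $\lfloor N/K\rfloor\ge1$) gives $R^{c}\ge K-\theta N$, a constant fraction of $R_{\mathfrak{F}_o}\le K$.

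The main obstacle is the tail control, because individual caches can be arbitrarily small and the products $P_i$ need not decay geometrically on their own; a naive bound would let up to $K$ tail terms each contribute a constant and ruin the ratio. What rescues the argument is precisely the coupling exposed above: the very fact that $S_{s^{\dagger}+1}$ exceeds the threshold, together with the ordering, forces $M_{s^{\dagger}+1}\gtrsim N/s^{\dagger}$, so all later increments — and hence the decay rate of the tail — are tied to the same $s^{\dagger}$ that the cut-set bound already delivers. Making the constant land exactly at $12$ (rather than the $\approx 8$ suggested by the leading-order estimate) is then a matter of carrying the $+1$'s and the $1/(1-e^{-\theta/(s^{\dagger}+1)})$ corrections through the small-$s^{\dagger}$ regime and tuning $\theta$, which I expect to be the most delicate bookkeeping in the proof.
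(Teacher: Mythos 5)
Your route is genuinely different from the paper's. The paper (which only sketches its argument) pivots on the maximizer $p$ of the cut-set bound, i.e.\ the index where $\sum_{i=1}^{s}M_i+(s-1)M_s$ crosses $N$, and runs a 15-case analysis on the relations among $N$, $p$ and $12$. You instead relax the product rate to $\sum_i e^{-S_i/N}$ via AM--GM and pivot on the threshold index $s^{\dagger}$ where the cumulative cache crosses $\theta N$. The skeleton is sound, and the key structural observation --- that $S_{s^{\dagger}+1}>\theta N$ plus the ordering forces $M_i>\theta N/(s^{\dagger}+1)$ for all $i>s^{\dagger}$, so the tail of $\sum_i e^{-S_i/N}$ is geometric with ratio $e^{-\theta/(s^{\dagger}+1)}$ --- is exactly the right device for tying the decay rate of the achievable scheme to the same index the cut-set bound sees. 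The $s^{\dagger}=0$ case also closes cleanly, with ratio at most $N/M_1<1/\theta$.

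Two steps do not survive as written, though. First, your $s^{\dagger}=K$ case is broken: $R^{c}\ge K-\theta N$ is vacuous whenever $N>K/\theta$ (e.g.\ $K=2$, $N=100$, $\theta=1/4$ gives $K-\theta N=-23$). You must use $\lfloor N/K\rfloor\ge N/(2K)$ there as well, which yields $R^{c}\ge K(1-2\theta)$ and makes this case simply the empty-tail instance of your main regime --- a trivial fix, but the argument you actually wrote fails. Second, and more seriously for this particular theorem, the constant $12$ is asserted rather than derived. The estimates you display give, uniformly in $s^{\dagger}$, roughly $\bigl(1+\tfrac{8e^{-\theta}}{3\theta}\bigr)/(1-2\theta)\approx 18$--$19$ for every admissible $\theta$; the binding cases are small $s^{\dagger}$. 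At $\theta=1/4$ the quantity $\bigl(s^{\dagger}+e^{-\theta}/(1-e^{-\theta/(s^{\dagger}+1)})\bigr)/\bigl(s^{\dagger}(1-2\theta)\bigr)$ evaluates to about $15.3$ at $s^{\dagger}=1$, $11.7$ at $s^{\dagger}=2$, $10.6$ at $s^{\dagger}=3$, decaying toward $\approx 8.2$. So $s^{\dagger}=1$ exceeds $12$ under your generic floor bound, and rescuing it requires using the exact value $\lfloor N/1\rfloor=N$, i.e.\ $R^{c}\ge 1-\theta$, which brings that case to about $10.2$. The argument therefore can be closed, but only through the explicit small-$s^{\dagger}$ checks you defer as ``bookkeeping''; since the entire content of the theorem is the specific constant, that deferred verification is not optional --- it is the proof.
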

\begin{proof}
The proof of Theorem~\ref{thm:order} is extremely complicated and we provide a sketch of it. We divide this proof into following 15 cases based on the relation among $N, p$ ($p$ is defined in the first insight of lower bound) and $12$, and prove that, in each case, such gap is less than 12. See~\cite{techreport} for more details. 
\end{proof}

It can be seen that the scheme $\mathfrak{F}_{o}$ exhibits the constant gap 12 to the cut-set bound. In fact, we will show in the numerical analysis that this gap is less than 4. As analyzed before, since the miss of coding opportunities in this regime will lead to the increase of the traffic volume, the only reason for such constant gap is that the increase of cut-set bound, i.e., a more increasing speed due to the heterogenous cache set. Here we discuss the main reason behind this trend. The cut-set bound argument presents an essence that the broadcasting signal of caching network plays a role in combining users' caches to joint reconstruct their requested contents, and for a user group $U$ of $s$ users, the minimum size of broadcasting signal will always be bounded by those $s$ minimum caches. If the cache sizes are distributed in a nonuniform manner, the gain coming from such combining information retrieval will be weaken, namely, the broadcasting signal is limited in this regime. Thus, although coding scheme $\mathfrak{F}_{o}$ will miss the coding opportunities, it still guarantee the order optimality.

Moreover, this constant gap is related to the deviation of the cache set. Consider two extreme cases: in the first case, the cache set has the smallest deviation that the all users' cache size is identical, then the gap has been proved to 12; in the second case, the cache set has the largest deviation that the cache size of half users is $0$ while anther half is $N$, then the traffic volume and the corresponding information-theoretical lower bound is all $K/2$, the constant gap reduced to $1$. This result implies that the lower bound might have a more increasing speed than the coded caching scheme with the deviation of the cache set increasing and the coded mulitcasting efficiency in such network will gradually degenerated to the uncoded manner. Then it is natural to ask how such gap relates to the deviation of the cache sizes. In the next Section, we will use the concept of probabilistic cache set to quantitatively investigate the relationship between this constant gap and the characteristics of the cache set.

\section{Theoretical Analysis \label{sec5}}

In this section, we assume the cache size of each user is a random variable and the corresponding cache set is referred to probabilistic cache set. We derive the expected gap between the traffic volume produced by scheme $\mathfrak{F}_{o}$ and the cut-set bound under the specific cache size distribution. Besides that, we preliminarily investigate the grouping effect of the coded caching scheme under heterogeneous cache size.

\subsection{Order Optimality under Probabilistic Cache Set}

Let $\mathbf{M_{K}}=\{\mathbf{M_{1,K}},\mathbf{M_{2,K}},\ldots,\mathbf{M_{K,K}}\}$ denote the cache set, where $\mathbf{M_{K}}$ is a set of order statistics that comes from a common parental distribution $\mathbf{F}$. Here, the order statistics represent the a series of independent random variables such that $\mathbf{M_{1,K}}\leq \mathbf{M_{2,K}}\leq \cdots \leq \mathbf{M_{K,K}}$. Further detail can be seen in~\cite{orderstatistics}.

Correspondingly, the order optimality is defined by
\begin{definition}(Expected order optimal)
The scheme $\mathfrak{F}$ is order optimal if only if
\begin{equation}\label{eq:deforderopt}
\mathbb{E}\left[\frac{R_{\mathfrak{F}}(\mathbf{M_{K}},N,K)}{R^{*}(\mathbf{M_{K}},N,K)}\right]  \leq  C(\mathbf{F}).
\end{equation}
\end{definition}

We can see that this constant is function of the parental distribution $\mathbf{F}$. In the following discussion, we will show that this constant is a function of $1$th moment and $2$th moment of $\mathbf{F}$ when $K=2$.

\begin{theorem}(Order optimality for normal distributed cache size)\label{thm:avorder}
For the caching problem, $N(N\geq2)$ contents, $K=2$ users each with cache size $\mathbf{M_{2}}=\{\mathbf{M_{1,2}},\mathbf{M_{2,2}}\}$, and comes from a normal distribution $\mathcal{N}(\mu,\sigma^2)$, then
\begin{equation}\label{eq:avorder}
\mathbb{E}\left[\frac{R_{\mathfrak{F}_{o}}(\mathbf{M_{2}},N,2)}{R^{c}(\mathbf{M_{2}},N,2)}\right]\leq 2-\frac{\sqrt{\pi}\mu+\sigma}{\sqrt{\pi}N}.
\end{equation}
\end{theorem}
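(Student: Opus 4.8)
The plan is to reduce the expected ratio to a pointwise bound on the integrand and then evaluate a single order-statistic expectation. First I would instantiate the closed-form trade-off of Theorem~\ref{thm:trafficformula2} at $K=2$, which gives
\begin{equation*}
R_{\mathfrak{F}_{o}}(\mathbf{M_2},N,2)=\left(1-\frac{\mathbf{M_{1,2}}}{N}\right)\left(2-\frac{\mathbf{M_{2,2}}}{N}\right),
\end{equation*}
matching the expression already seen in Example~2. For the denominator I would not attempt to evaluate the full cut-set bound of Theorem~\ref{thm:cutsetbound}; instead I would retain only the $s=1$ term, which reads $1-\mathbf{M_{1,2}}/N$ (the smallest single cache). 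Since $R^{c}$ is a maximum over $s$, this yields the clean lower bound $R^{c}(\mathbf{M_2},N,2)\geq 1-\mathbf{M_{1,2}}/N$, conveniently sidestepping the floor term $\lfloor N/2\rfloor$ that would otherwise appear in the $s=2$ term.

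Combining these two facts gives the pointwise bound
\begin{equation*}
\frac{R_{\mathfrak{F}_{o}}(\mathbf{M_2},N,2)}{R^{c}(\mathbf{M_2},N,2)}\leq \frac{(1-\mathbf{M_{1,2}}/N)(2-\mathbf{M_{2,2}}/N)}{1-\mathbf{M_{1,2}}/N}=2-\frac{\mathbf{M_{2,2}}}{N},
\end{equation*}
where the cancellation is legitimate because the constraint $M_k\in[0,N]$ forces $1-\mathbf{M_{1,2}}/N> 0$. Taking expectations and using linearity, the entire problem collapses to evaluating the larger order statistic, namely $\mathbb{E}\left[R_{\mathfrak{F}_{o}}/R^{c}\right]\leq 2-\mathbb{E}[\mathbf{M_{2,2}}]/N$.

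The remaining step is to compute $\mathbb{E}[\mathbf{M_{2,2}}]=\mathbb{E}[\max(X_1,X_2)]$ for two i.i.d.\ $\mathcal{N}(\mu,\sigma^2)$ variables. I would standardize, writing $\max(X_1,X_2)=\mu+\sigma\max(Z_1,Z_2)$ with $Z_i$ standard normal, and use the identity $\max(Z_1,Z_2)=\tfrac12(Z_1+Z_2)+\tfrac12|Z_1-Z_2|$. Since $Z_1-Z_2\sim\mathcal{N}(0,2)$, its absolute value is half-normal with $\mathbb{E}|Z_1-Z_2|=2/\sqrt{\pi}$, so $\mathbb{E}[\max(Z_1,Z_2)]=1/\sqrt{\pi}$ and hence $\mathbb{E}[\mathbf{M_{2,2}}]=\mu+\sigma/\sqrt{\pi}$. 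Substituting yields exactly $2-(\sqrt{\pi}\mu+\sigma)/(\sqrt{\pi}N)$, completing the argument.

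I expect the only mildly delicate point to be the Gaussian order-statistic computation $\mathbb{E}[\mathbf{M_{2,2}}]=\mu+\sigma/\sqrt{\pi}$; the inequalities producing the pointwise ratio bound are routine, and the real content lies in that single expectation. A secondary caveat worth one sentence is the tacit assumption that the normal law is effectively supported in $[0,N]$, so that the cache-size constraint $M_k\in[0,N]$ and the positivity $1-\mathbf{M_{1,2}}/N>0$ are respected; I would state this as the regime in which the bound is intended to hold.
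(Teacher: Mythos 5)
Your proof is correct: the $K=2$ specialization of Theorem~\ref{thm:trafficformula2} factors as $(1-\mathbf{M_{1,2}}/N)(2-\mathbf{M_{2,2}}/N)$, the $s=1$ cut-set term gives $R^{c}\geq 1-\mathbf{M_{1,2}}/N$, the cancellation yields the pointwise bound $2-\mathbf{M_{2,2}}/N$, and the Gaussian order-statistic identity $\mathbb{E}[\max(X_1,X_2)]=\mu+\sigma/\sqrt{\pi}$ produces exactly the stated right-hand side. The paper defers its own proof to the technical report, but since your derivation reproduces the bound with equality at the final step, it is evidently the intended argument; your closing caveat about the normal law versus the constraint $M_k\in[0,N]$ (needed both for the positivity of $1-\mathbf{M_{1,2}}/N$ and for the untruncated Gaussian expectation to be legitimate) is a gap in the theorem's own statement, which the paper itself only partially acknowledges, not a gap in your proof.
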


From Theorem~\ref{thm:avorder}, we can clearly see how the constant gap related to the variance of the cache set, where the variance of the cache set plays a linear negative role in such constant. Similarly, for the uncoded caching scheme, such gap is less than
\begin{equation*}
2-\frac{2\sigma}{\sqrt{\pi}N},
\end{equation*}
which has a faster speed in the uncoded cache, seen $2\sigma$ rather than the $\sigma$ of coded version. Consider the traffic volume produce by the uncoded caching scheme $\mathfrak{F}_{u}$ is $\mathbb{R}_{\mathfrak{F}_{u}}=2-\frac{2\mu}{N}$ and irrelevant to the variance of cache set. Thus the descending gap of uncoded version mainly come from the increment of the cut-set bound, which further implies the deviation of cache set will strengthen the inherent degenerated performance of wireless coded mulitcasting. Remark that the variance $\sigma$ cannot be infinity since the range of $\mathbf{M_{K}}$ is $[0,N]$.

\subsection{Grouping Effect of Heterogenous Cache}

When the users' cache sizes are extremely different, the coded caching scheme will approximate to the GCD. This is illustrated in the following example.

\textbf{Example 5} (Group coded delivery) Suppose a system is distributing $N$ contents to $K$ users (here $K$ is an even number) with the cache set in first scenario is $\mathcal{M}_{d}=\{\mathcal{M}_{1},\mathcal{M}_{2}\}$, where $\mathcal{M}_{1}=\{M,\ldots,M\}$, $\mathcal{M}_{2}=\{\alpha M \ldots \alpha M\}, \alpha>0$, and $|\mathcal{M}_{1}|=|\mathcal{M}_{2}|=\frac{K}{2}$. Under the scheme $\mathfrak{F}_{o}$, we have
\begin{align*}\label{eq:dgrouptraffic}
R_{\mathfrak{F}_{o}}(\mathcal{M}_{1},N,\frac{K}{2})+Q_{\frac{K}{2}}\cdot R_{\mathfrak{F}_{o}}(\mathcal{M}_{2},N,\frac{K}{2}),
\end{align*}
where $Q_{\frac{K}{2}}=\left(1-M/N\right)^{\frac{K}{2}}$. Since $\alpha M\leq N$, we have $Q_{\frac{K}{2}}\geq \left(1-1/\alpha\right)^{\frac{K}{2}}$. Then the cache set $\mathcal{M}_{d}$ has a large deviation corresponds to the large value of parameter $\alpha$ and that $Q_{\frac{K}{2}}$ approximates to $1$. Thus, the traffic formula $R_{\mathfrak{F}_{o}}(\mathcal{M}_{d},N,K)$ will approximate to the traffic produced by two groups.

As for the general case that has $L$ groups, we use the concept of probabilistic cache set to develop an upper bound of the increment of traffic volume due to the GCD and show that this upper bound is related to the deviation of the cache set, i.e., the larger deviation, the smaller gap between coded cache and GCD.

\begin{theorem}(Expected upper bound of GCD)\label{thm:avgrouptraffic}
For caching problem with $N$ contents, $K(K<N)$ users with probabilistic cache set $\mathbf{M_{K}}$, we divide all users into $L$ groups $\mathcal{K}=\{\mathbf{K_{1}},\ldots,\mathbf{K_{L}}\}$, $\mathbf{M_{K}}=\{\mathbf{M_{K}^{1}},\ldots,\mathbf{M_{K}^{L}}\}$, then we have,
\begin{equation}\label{eq:upperbound}
\mathbb{E}\left[\frac{\sum_{i=1}^{L}R_{\mathfrak{F}_{o}}(\mathbf{M_{K}^{i}},N,|\mathbf{K_{i}}|)}{R_{\mathfrak{F}_{o}}(\mathbf{M_{K}},N,|\mathbf{K}|)}\right] \leq \frac{1}{1+e^{-\frac{K\mu}{N}}}L^{\frac{\mu}{\mu+\sigma}}.
\end{equation}
\end{theorem}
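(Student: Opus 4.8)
The plan is to combine the close-form traffic expression of Theorem~\ref{thm:trafficformula2} with a deterministic sandwich of the grouping ratio, and only then average over the order statistics of $\mathbf{F}$. Writing $x_{j}=1-M_{j,K}/N\in[0,1]$, so that $x_{1}\geq x_{2}\geq\cdots\geq x_{K}$ since the cache set is ordered, Theorem~\ref{thm:trafficformula2} gives
\begin{equation*}
R_{\mathfrak{F}_{o}}(\mathbf{M_{K}},N,K)=\sum_{i=1}^{K}\prod_{j=1}^{i}x_{j},
\end{equation*}
and the numerator of~(\ref{eq:upperbound}) is the same functional evaluated separately on the factors belonging to each group $\mathbf{K_{i}}$. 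Because restricting to a subgroup only deletes factors that are at most one, every within-group partial product dominates the corresponding full-system product, so the ratio is at least one and the content of the theorem is the matching upper bound. I would therefore bound the numerator from above and the denominator from below for each realization of $\mathbf{M_{K}}$, reducing the claim to an inequality between order-statistic functionals before taking any expectation.

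For the numerator I would use that within a group the smallest cache, i.e.\ the largest factor $1-M/N$, controls the whole sum: if $y_{1}^{(i)}\geq\cdots\geq y_{n_{i}}^{(i)}$ are the factors of group $\mathbf{K_{i}}$ then
\begin{equation*}
R_{\mathfrak{F}_{o}}(\mathbf{M_{K}^{i}},N,n_{i})=\sum_{l=1}^{n_{i}}\prod_{m=1}^{l}y_{m}^{(i)}\leq\sum_{l=1}^{n_{i}}\big(y_{1}^{(i)}\big)^{l},
\end{equation*}
so the numerator is governed by the $L$ group leaders $y_{1}^{(i)}$. For the denominator I would retain the geometric-sum structure and lower bound it through Jensen's inequality, its common ratio concentrating near $1-\mu/N$; the tail estimate $(1-\tfrac{\mu}{N})^{K}\leq e^{-K\mu/N}$ is what produces the $1+e^{-K\mu/N}$ in the denominator and hence the prefactor $1/(1+e^{-K\mu/N})$. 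Collecting the two estimates collapses the ratio to a comparison between a sum over the $L$ group leaders and a single normalized geometric sum.

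The power $L^{\mu/(\mu+\sigma)}$ is the genuinely delicate point, and it is where the deviation of the cache set enters, matching the intuition of Example~5 that a larger spread makes the within-group products decay faster so the $L$ leaders cannot all be large at once. I would capture this by applying H\"older's inequality to the leader sum,
\begin{equation*}
\sum_{i=1}^{L}a_{i}\leq\Big(\sum_{i=1}^{L}a_{i}^{p}\Big)^{1/p}L^{1/q},\qquad \tfrac{1}{p}+\tfrac{1}{q}=1,
\end{equation*}
with the conjugate pair chosen as $1/q=\mu/(\mu+\sigma)$, equivalently $p=(\mu+\sigma)/\sigma$, so that the counting factor is exactly $L^{\mu/(\mu+\sigma)}$. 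This choice interpolates correctly between the two extremes discussed after Theorem~\ref{thm:order}: as $\sigma\to0$ one has $p\to\infty$ and $L^{1/q}\to L$ (homogeneous, linear loss), while as $\sigma\to\infty$ one has $p\to1$ and $L^{1/q}\to 1$ (extreme spread, no loss). The residual $p$-norm $(\sum_{i}a_{i}^{p})^{1/p}$ must then be absorbed, in expectation, into the denominator estimate and the sigmoid prefactor.

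The remaining and hardest step is the passage to the expectation: the $\mathbf{M_{j,K}}$ are order statistics and hence dependent, so $\mathbb{E}[\prod_{j}x_{j}]\neq\prod_{j}\mathbb{E}[x_{j}]$ and one cannot naively substitute means into the sandwich. The key is to arrange the deterministic bounds so that the surviving numerator and denominator share a common geometric normalization that cancels \emph{before} averaging, leaving a single functional of the order statistics whose expectation is controlled by only the first two moments $\mu$ and $\sigma^{2}$ of $\mathbf{F}$, exactly as in the two-user computation of Theorem~\ref{thm:avorder}. I expect the main obstacle to be keeping the H\"older exponent $p=(\mu+\sigma)/\sigma$ consistent with those order-statistic moment estimates, so that the residual $p$-norm is bounded by a constant and $\sigma$ appears in precisely the stated form; a final Jensen step on the convex map $L\mapsto L^{\mu/(\mu+\sigma)}$ is what I expect to fuse the pieces into~(\ref{eq:upperbound}).
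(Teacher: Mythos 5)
The paper itself contains no proof of Theorem~\ref{thm:avgrouptraffic}: it defers entirely to the technical report and even calls (\ref{eq:upperbound}) an ``approximate upper bound'' in the caption of Fig.~\ref{fig:numericalbound}, so there is nothing in the text to compare your argument against. Judged on its own, your proposal assembles sensible ingredients --- the closed form of Theorem~\ref{thm:trafficformula2}, the term-by-term domination showing the grouping ratio is pointwise at least one, and a H\"older exponent $p=(\mu+\sigma)/\sigma$ engineered so that the counting factor is exactly $L^{\mu/(\mu+\sigma)}$ --- but it is a plan rather than a proof, and the steps you defer are precisely where the theorem lives.

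Concretely: (i) the prefactor. You assert that the tail estimate $(1-\mu/N)^K\le e^{-K\mu/N}$ ``produces'' the $1+e^{-K\mu/N}$, but a geometric lower bound on the denominator yields a factor $1-(1-\mu/N)^K\ge 1-e^{-K\mu/N}$ and hence an upper bound on the ratio of the form $1/(1-e^{-K\mu/N})>1$, not the sigmoid $1/(1+e^{-K\mu/N})<1$. Moreover, since you have already shown the ratio is $\ge 1$ for every realization (and equals $1$ when $L=1$, where the stated right-hand side drops below $1$), no lower bound on the denominator alone can ever deliver a sub-unity prefactor; it must emerge from the interaction with the $L^{\mu/(\mu+\sigma)}$ term and the expectation, and you never show how. (ii) The residual H\"older norm $\left(\sum_i a_i^p\right)^{1/p}$ with $a_i=\sum_{l}\left(y_1^{(i)}\right)^l$ can be as large as $\max_i n_i$ (a group whose leader has near-zero cache contributes $a_i\approx n_i$), while the denominator $\sum_{i=1}^{K}\prod_{j=1}^{i}x_j$ can simultaneously be $O(1)$; ``absorbing'' this residual is the entire difficulty of the bound and is left as a hope. (iii) The passage to the expectation over the dependent order statistics of $\mathbf{F}$ --- the only place $\sigma$ can actually enter, and the step that must justify fixing $p$ as a function of $\mu$ and $\sigma$ before seeing the realization --- is never carried out; bounding $\mathbb{E}[A/B]$ by separately estimating correlated random $A$ and $B$ is exactly the maneuver you flag as illegitimate without then replacing it. As written, the proposal reverse-engineers the two factors on the right-hand side of (\ref{eq:upperbound}) separately and postpones the reconciliation that would make them into a theorem.
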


It can be seen that this bound is dependent on the expectation and variance of distribution $\mathbf{F}$ and shows a trend that the increment of the traffic volume will be reduced when the variance increases. This result means that the heterogenous cache set will diminish the effect of the GCD. The reason that we get this kind of form of this upper bound can be seen in the full version of this paper~\cite{techreport}. The Fig.~\ref{fig:numericalbound} partly shows the tightness and effectiveness of this bound.

\begin{figure}[htb]
\vspace{-0.3cm}
 \centerline{ \includegraphics[angle=0,width=0.49\textwidth]{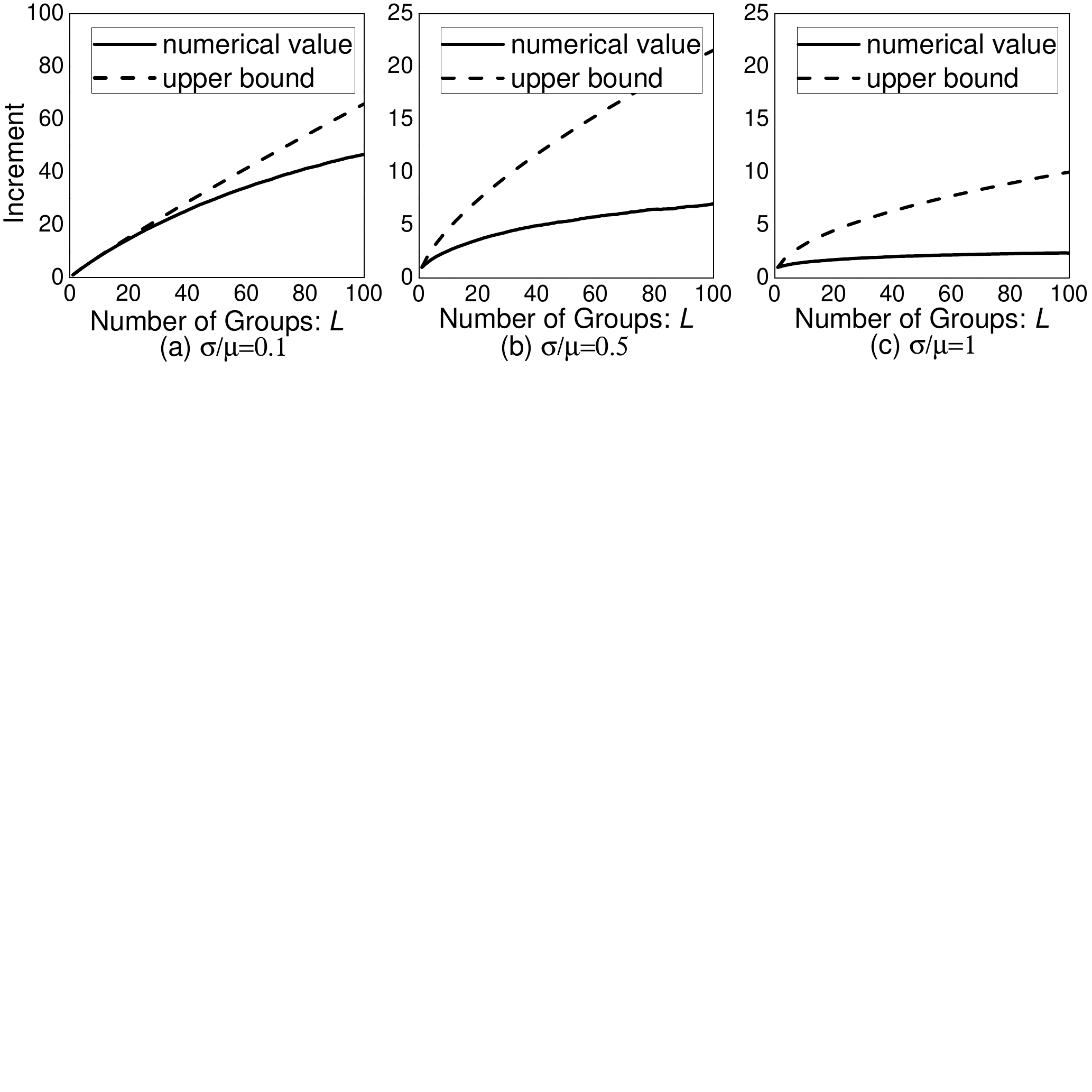}}
 \vspace{-6.2cm}
  \caption{The comparison between approximate upper bound (\ref{eq:upperbound}) and real value of increment of traffic volume under four kinds of deviation of cache set. The system parameters $N=500$, $K=300$ and $\mu=200$.}
  \label{fig:numericalbound}
   \vspace{-0.3cm}
 \end{figure}

As show in Fig.~\ref{fig:numericalbound}, when the variance is extremely small, our approximate bound is tight, especially when the number of groups is small, while when the variance is large, this approximate bound will be a little loose. Besides that, another important observation is that he increment of traffic volume is factually convergent to a constant. In particular, when the variance $\sigma=\mu$, this increment is bounded by only a constant factor 2.

Moreover, the grouping effect of the heterogenous cache size provides us with an insight that, in the practical scenario, the coded caching scheme should be operated in the grouping manner. As shown in Fig.~\ref{fig:numericalbound}(b), the variance $\sigma=0.5\mu$ to denote the most practical scenario. If we divide the 300 users into 15 groups and each group has 20 users, the complexity will be reduced from $2^300\approx 10^{90}$ to $15\cdot 2^{20}\approx 10^{7}$, while the increment of traffic volume is just two times.

\section{Numerical Results \label{sec6}}

In the previous section, all analysis is in the scenario that $K\leq N$. In this section, we present the numerical results when $K>N$. Then, we investigate the impact of the system parameters on the delivery-phase traffic volume. Moreover, we systematically investigate the performance of GCD.

\subsection{The Effect of Cache Set}

We have proven that when $K=2$, the constant gap is related to the variance of the cache set. For large $K$, we use the numerical analysis to show how the constant gap scales as the deviation of the cache set. To guarantee the deviation of the cache set, the average cache size cannot be too large. In fact, this constraint is reasonable, since the number of contents is mostly much larger than the cache size. Thus, we set the maximum average cache size $\mu=0.3N$, and the cache set comes from a normal parental distribution.

\begin{figure}[htb]
\vspace{-0.2cm}
 \centerline{ \includegraphics[angle=0,width=0.475\textwidth]{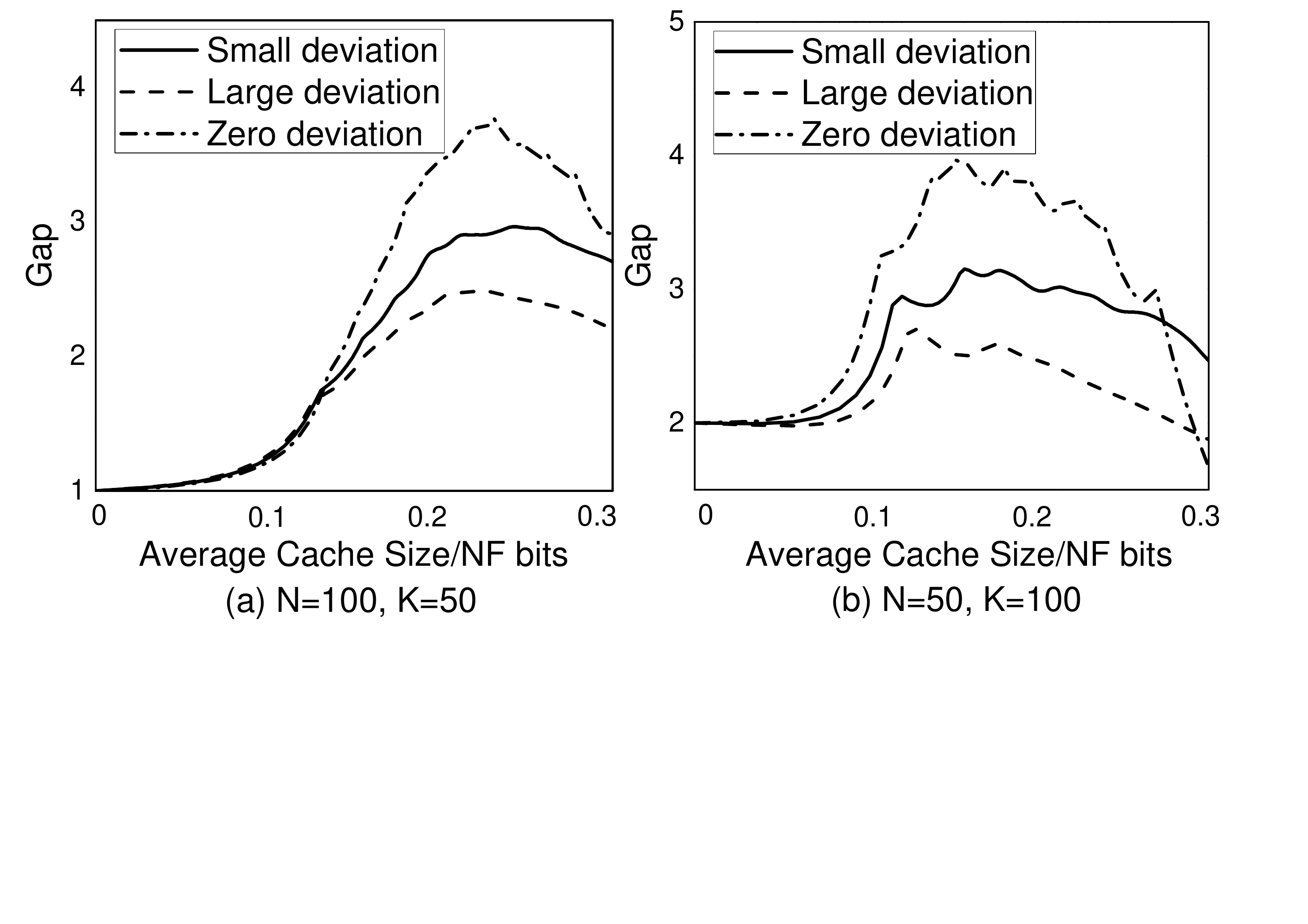}}
 \vspace{-2.3cm}
  \caption{The impact of average cache size on the gap between $R_{\mathfrak{F}_{o}}(\mathcal{M},N,K)$ and $R^{c}(\mathcal{M},N,K)$. Note that the scale in the (a) is a semi-logarithmic coordinate system.}
  \label{fig:simucacheset1}
   \vspace{-0.3cm}
 \end{figure}

The gap in Fig.~\ref{fig:simucacheset1} and Fig.~\ref{fig:simucacheset2} refers to the constant gap between traffic volume produced by scheme $\mathfrak{F}_{o}$ and cut-set bound. Fig.~\ref{fig:simucacheset1} plots the gap versus the average cache size $\mu$ under fixed variance of cache set. The variance under small and large deviation are denoted by $\sigma=0.1\mu$ and $\sigma=0.7\mu$, separately. The zero deviation refers to the traditional case that the cache size is uniform and we regard it as a baseline. In the Fig.~\ref{fig:simucacheset1}(a), the number of contents $N=100$ is larger than the number of users $K=50$. The gap under large deviation is strictly less than that under small deviation, and less than the baseline.  While in the Fig.~\ref{fig:simucacheset1}(b) that $N=50$, $K=100$, this relation shows piecewise characteristics: when the average cache size is small ($\mu<0.25N$), it has the same manner as $N>K$, when the average cache size is large ($\mu>0.25N$), the gap under homogenous cache set decreases dramatically and less than the other two cases.

\begin{figure}[htb]
\vspace{-0.3cm}
 \centerline{ \includegraphics[angle=0,width=0.475\textwidth]{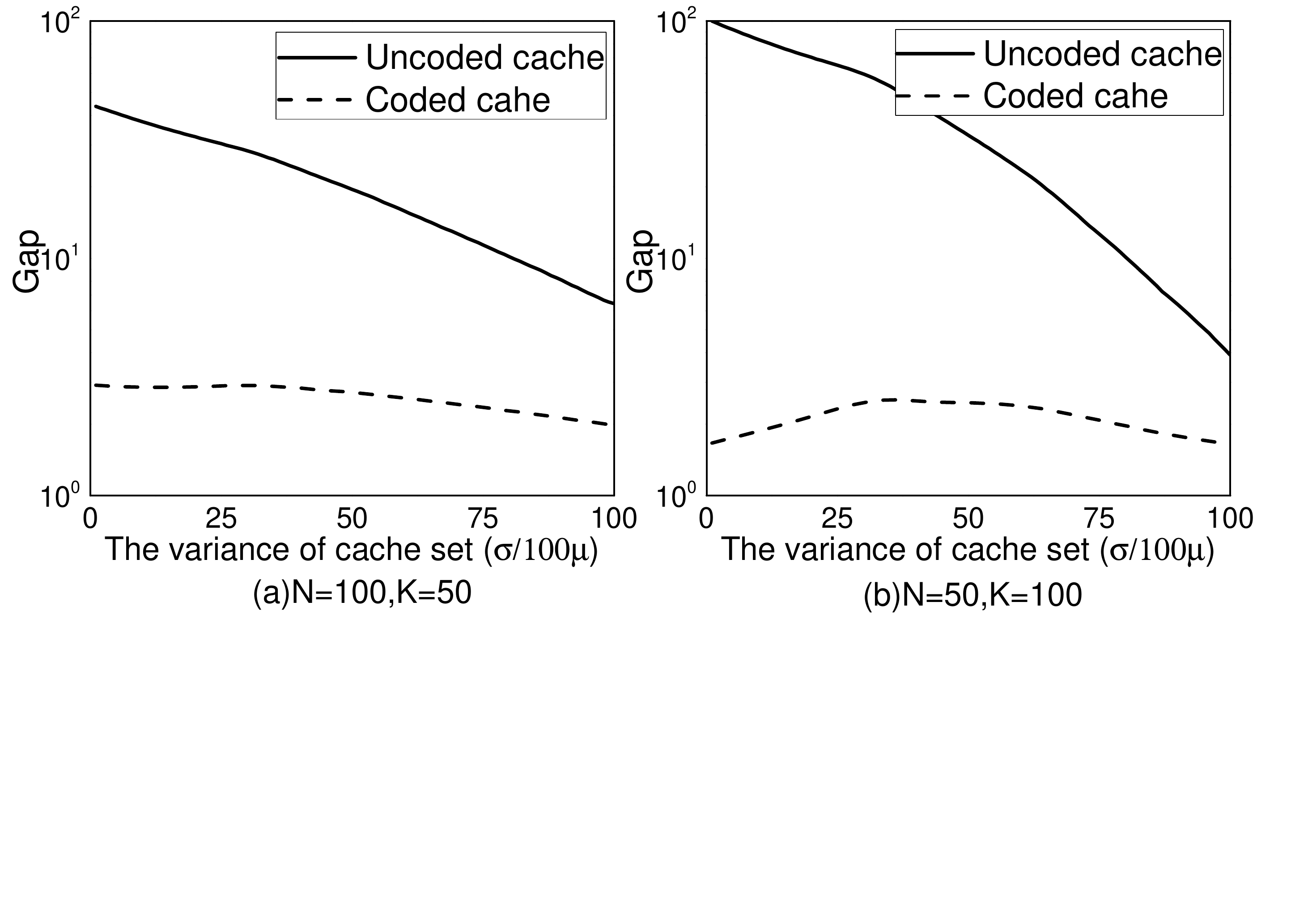}}
 \vspace{-2.3cm}
  \caption{The impact of deviation of cache set on the gap.}
   \label{fig:simucacheset2}
   \vspace{-0.5cm}
 \end{figure}

Fig.~\ref{fig:simucacheset2} plots the gap versus the variance of the cache set under fixed average cache size $\mu=30$. It can be seen that, the performance gain of coded cache compared with the uncoded cache is $15\times$ under small deviation ($\sigma=0.05\mu$), while only approximately $3\times$ under large deviation ($\sigma=0.6\mu$). For the case that $N>K$, seen in Fig~\ref{fig:simucacheset2}(a), both gap-variance curves of coded and uncoded cache have a declining trend and uncoded version performs faster. While for the case that $K<N$, seen in Fig~\ref{fig:simucacheset2}(b), the gap-variance curve of coded cache shows an unique single valley manner: under small cache sizes, it first increases, then it decreases linearly when the variance is large.

\subsection{The Impact of number of contents and users: $N$ and $K$}

Then we present how the gap scales as the system parameters such as $N$ and $K$. Assume there are two relationships between $K$ and $N$: $N=\Theta(K)$ and $N=\omega(K)$. The first case refers to the number of users and the number of contents have the same order. The second case refers to the number of content is extremely larger that the number of users. The characteristic of cache set is $\mu=30,\sigma=0.3\mu$.

\begin{figure}[htb]
\vspace{-0.0cm}
 \centerline{ \includegraphics[angle=0,width=0.475\textwidth]{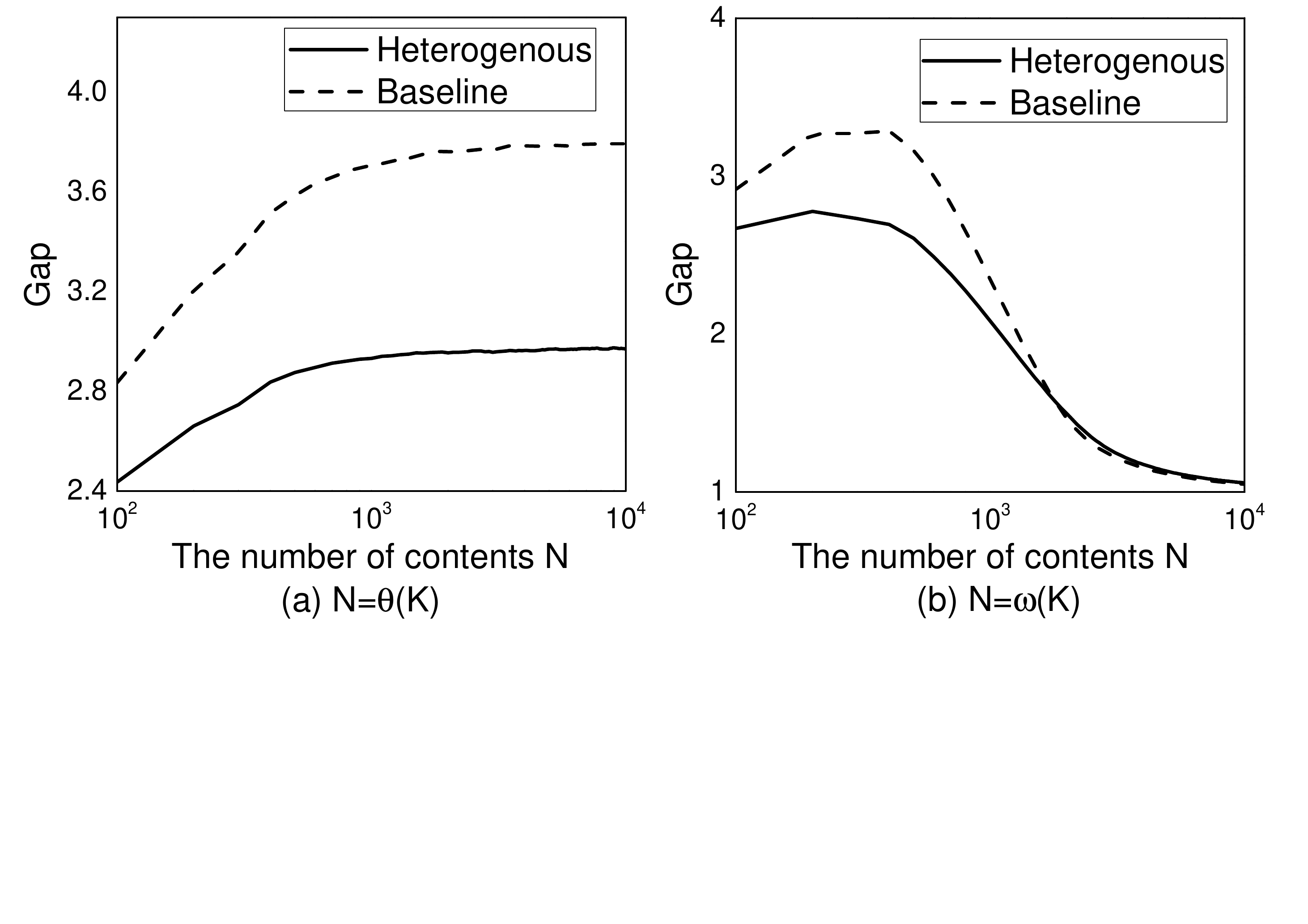}}
 \vspace{-2.3cm}
  \caption{The impact of system parameters $N$ and $K$ on the gap between $R_{\mathfrak{F}_{o}}(\mathcal{M},N,K)$ and $R^{c}(\mathcal{M},N,K)$. The average cache size is $\mu=30.$}
  \label{fig:simuNK}
   \vspace{-0.3cm}
 \end{figure}

In Fig.~\ref{fig:simuNK}(a), we can see that the gap gradually increases to a constant when $N$ is sufficiently large, while in Fig.~\ref{fig:simuNK}(b), the gap gradually decreases to $1$. Based on the asymptotic analysis, we can get easily get the reason for $N=\omega(K)$. Since $N\rightarrow\infty$, each term in (\ref{eq:trafficformula2}) will approximate to $1$ and $R_{\mathfrak{F}_{o}}(\mathcal{M},N,K)$ will approximate to $K$. In the similar manner, the cut-set bound $R^{c}(\mathcal{M},N,K)$ will also approximate to $K$. Thus the gap will approximate to 1. The reason for $N=\Theta(K)$ is based on an extremely complicated series analysis and can be seen in the technical report~\cite{techreport}.

\section{Conclusion and Open Problems \label{sec7}}

In this paper, we have investigated the fundamental limits of coded cache under the heterogenous cache set. Through deriving the fundamental bound in this regime, we have pointed out that coded caching scheme with zero-bits-padding can still guarantee the order optimality. Moreover, using the concept of probabilistic cache set, we have proven such gap is closely related to the deviation of cache set. And above results show the inherent degenerated performance of the bottleneck network under the heterogenous cache sizes. Besides that, we also show the grouping effect of the coded caching scheme and this result provide an insight with us that the coded caching scheme under the heterogenous setting should be operated in the grouping manner.

There are several important open problems induced by the heterogenous cache framework:
\begin{itemize}
	\item In this work, we point out the necessity of GCD under heterogenous cache. What is the optimal grouping plan if the number of group is given?
	\item The content popularity and heterogenous cache size will both cause the miss of coding opportunities. How is the performance when we consider them simultaneous?
	\item This work focuses on the worst-case analysis, how is the performance if we study this problem in the aspect of the average performance?
\end{itemize}

\bibliographystyle{ieeetr}
%\bibliography{Ch_ref}

\begin{thebibliography}{8}

\bibitem{caching1}S. Gitzenis, G. S. Paschos, L. Tassiulas,
\newblock ``Asymptotic laws for joint content replication and delivery in wireless networks,''
\newblock in {\em IEEE Transactions on Information Theory}, vol. 59, no.5, pp. 2760--2776, 2013.

\bibitem{caching2}K. Shanmugam, N. Golrezaei, A. G. Dimakis, A. F. Molisch, G. Caire,
\newblock ``FemtoCaching: Wireless Content Delivery Through Distributed Caching Helpers,''
\newblock in {\em IEEE Transactions on Information Theory}, vol. 59, no.12, pp. 8402--8413, 2013.

\bibitem{caching3}
S. Saha, A. Lukyanenko, A. Yla-Jaaski
\newblock ``Cooperative caching through routing control in information-centric networks,''
\newblock in {\em Proc. INFOCOM}, 2013.

\bibitem{caching4}
E. J. Rosensweig, D. S. Menasche, J. Kurose,
\newblock ``On the steady-state of cache networks,''
\newblock in {\em Proc. INFOCOM}, 2013.

\bibitem{caching5}
S. Hasan, S. Gorinsky, C. Dovrolis, R. K. Sitaraman,
\newblock ``Trade-offs in optimizing the cache deployments of CDNs,''
\newblock in {\em Proc. INFOCOM}, 2014.

\bibitem{codedcaching1}M. A. Maddah-Ali, U. Niesen,
\newblock ``Fundamental limits of caching,''
\newblock in {\em IEEE Transactions on Information Theory}, vol. 60, no.5, pp. 2856--2867, 2014.

\bibitem{codedcaching2}
M. A. Maddah-Ali, Niesen U,
\newblock ``Decentralized coded caching attains order-optimal memory-rate tradeoff,''
\newblock {\em IEEE/ACM Transactions on Networking}, vol. PP, no.1, pp. 1, 2014.

\bibitem{codedcachingonline}
R. Pedarsani, M. A. Maddah-Ali and  U. Niesen,
\newblock ``Online coded caching,''
\newblock {\em IEEE/ACM Transactions on Networking}, vol. PP, no.1, pp. 1, 2015.

\bibitem{codedcachinghier}
N. Karamchandani, U. Niesen, M. A. Maddah-Ali, S. Diggavi,
\newblock ``Hierarchical coded caching,''
\newblock Online: \url{http://arxiv.org/pdf/1403.7007}.

\bibitem{codedcachingmulti}
J. Hachem, N. Karamchandani, and S. Diggavi,
\newblock ``Content caching and delivery over heterogeneous wireless networks,''
\newblock Online: \url{http://arxiv.org/pdf/1404.6560}.

\bibitem{codedcachingnonuniform}
U. Niesen, M. A. Maddah-Ali,
\newblock ``Coded caching with nonuniform demands,''
\newblock in {\em Proc. IEEE INFOCOM WKSHPS}, 2014.

\bibitem{ICNC2015}
S. Wang, X. Tian and H. Liu,
\newblock ``Exploiting the unexploited of coded caching for wireless content distribution,''
\newblock in {\em Proc. IEEE ICNC}, 2015.

\bibitem{NonuniformZhang}
J. Zhang, X. Lin, X. Wang,
\newblock ``Coded caching under arbitrary popularity distributions,''
\newblock in {\em Proc. IEEE Information Theory and Applications Workshop}, 2015.

\bibitem{NonuniformJi}
M. Ji, A. M. Tulio, J. Llorca, G. Caire,
\newblock ``Order-optimal rate of caching and coded multicasting with random demands,''
\newblock Online: \url{http://arxiv.org/pdf/1502.03124}.

\bibitem{lowerbound}
H. Ghasemi and A. Ramamoorthy,
\newblock ``Improved lower bounds for coded caching,''
\newblock Online: \url{http://arxiv.org/pdf/1501.06003}.

\bibitem{indexcoding1}
Y. Brik, T. Kol,
\newblock ``Coding on demand by an informed source (ISCOD) for efficient broadcast of different supplemental data to caching clients,''
\newblock {\em IEEE/ACM Transactions on Networking}, vol. 14, no.1, pp. 2825--2830, 2006.

\bibitem{indexcoding2}
Chaudhry, A. R. Mohammad, A. Sprintson.
\newblock ``Efficient algorithms for index coding,''
\newblock in {\em Proc. INFOCOM}, 2008.

\bibitem{indexcoding3}
Z. Bar-Yossef, Y. Birk, T. S. Jayram, T. Kol,
\newblock ``Index coding with side information,''
\newblock in {\em IEEE Transactions on Information Theory}, vol. 57, no.3, pp. 1479--1494, 2011.

\bibitem{techreport}
S. Wang, W. Li, X. Tian, H. Liu,
\newblock ``Fundamental limits of heterogeneous caching,''
\newblock in {\em preparation}

\bibitem{tcover}
T. M. Cover and J. A. Thomas,
\newblock Elements of Information Theory,
\newblock {\em Wiley}, 1991.

\bibitem{orderstatistics}
H. A. David,
\newblock Order Statistics,
\newblock {\em Wiley}, 1981.
\end{thebibliography}

\end{document}